\def\ds{\displaystyle}
\def\eps{\epsilon} 
\newcommand{\Frac}[2]{\ds\frac{\ds #1}{\ds #2}}
\newtheorem{theorem}{Theorem}
\newtheorem{lemma}[theorem]{Lemma}
\newenvironment{proof}[1][Proof]{\noindent\textbf{#1.} }{\ \rule{0.5em}{0.5em}}
\title{Does the ``uptick rule'' stabilize the stock market? Insights from Adaptive Rational Equilibrium Dynamics}
\author[*]{F. Dercole\thanks{fabio.dercole@polimi.it}}
\author[**]{ D. Radi\thanks{davide.radi@unibg.it}}
\affil[*]{\emph{\small Department of Electronics and Information, Politecnico of Milan, Italy}}
\affil[**]{\emph{\small Department of Management, Economics and Quantitative Methods, University of Bergamo}}
\date{}
\begin{document}
\maketitle
\begin{abstract}
This paper investigates the effects of the ``uptick rule'' (a short selling regulation formally known as rule 10a-
1) by means of a simple stock market model, based on the ARED (adaptive rational equilibrium dynamics) modeling framework, where heterogeneous and adaptive beliefs on the future prices of a risky asset were first shown to be responsible for endogenous price fluctuations.

The dynamics of stock prices generated by the model, with and without the uptick-rule restriction, are analyzed by pairing the classical fundamental prediction with beliefs based on both linear and nonlinear technical analyses. The comparison shows a reduction of downward price movements of undervalued shares when the short selling restriction is imposed. This gives evidence that the uptick rule meets its intended objective. However, the effects of the short selling regulation fade when the intensity of choice to switch trading strategies is high. The analysis suggests possible side effects of the regulation on price dynamics.
\end{abstract}

\textbf{Keywords:} Asset pricing model, Heterogeneous beliefs, Endogenous price fluctuations, Piecewise-smooth dynamical systems, Chaos, Uptick-rule.

\textbf{JEL codes:} C62, G12, G18

\section{Introduction}

\bigskip
Short selling is the practice of selling financial instruments that have been
borrowed, typically from a broker-dealer or an institutional investor, with
the intent to buy the same class of financial instruments in a future period
and return them back at the maturity of the loan.

By short selling, investors open a so-called ``short position'', that is technically equivalent to holding a
negative amount of shares of the traded asset, with the expectation that the
asset will recede in value in the next period.
At the closing time
specified in the short selling contract, the debt is compounded
with interest which occurred during the period of the financial operation, for
this reason short sellers prefer to close the short position and reopen
a new one with the same features, rather than extending the position over the closing time (see, e.g., \citep{Hull}).

A short position is the counterpart of the (more conventional) ``long position'', i.e. buying a security
such as a stock, commodity, or currency, with the expectation that the asset
will rise in value.

Short selling is considered the father of the modern derivatives and, as
such, it has a double function: it can be used as an insurance device, by
hedging the risk of long positions in related stocks thus allowing risky
financial operations, or for speculative purposes, to profit from an
expected downward price movement.
Moreover, financial speculators can sell short stocks in an effort to drive
down the related price by creating an imbalance of sell-side interest, the
so called ``bear raid'' action. This
feedback may lead to the market collapse, and has indeed been observed
during the financial crises of 1937 and 2007, see, e.g.,
\citep{MisraLagiBarYam2012}.

Many national authorities have developed different kinds of short selling
restrictions to avoid the negative effect of this financial practice%
\footnote{%
It is worth mentioning that short sale restrictions are nearly as old as
organized exchanges. The first short selling regulation was enacted in 1610
in the Amsterdam stock exchange. For a review of the history of short sale
restrictions, see Short History of the Bear, Edward Chancellor, October 29,
2001, copyright David W. Tice and Co.}.
Most of the regulations are based on ``price tests'', i.e., short selling is
allowed or restricted depending on some tests based on recent price movements. The
best known and most widely applied of such regulations is the so-called
``{\it uptick rule}'', or rule 10a-1, imposed in 1938
by the U.S. Securities and Exchange Commission%
\footnote{%
The rule was originally introduced under the Securities Exchange Act of 1934.}
(hereafter SEC) to protect investors and was in force until 2007. This rule
regulated short selling into all U.S. stock markets and in the Toronto Stock
Exchange.
Other financial markets, like the London Stock Exchange and the Tokyo Stock Exchange,
have different or no short selling restrictions (for a summary of short sale
regulations in approximately 50 different countries see \citep{BrisGoetzmannZhou2007}).

The uptick rule originally stated that short sales are allowed only on an
uptick, i.e., at a price higher than the last reported transaction price.
The rule was later relaxed to allow short sales to take place on a
zero-plus-tick as well, i.e., at a price that is equal to the last sale price but
only if the most recent price movement has been positive. Conversely, short
sales are not permitted on minus- or zero-minus-ticks, subject to narrow
exceptions%
\footnote{%
In the Canadian stock markets, the tick test was introduced under rule 3.1 of UMIR
(Universal Market Integrity Rules). It prevents short sales at a price that is less
than the last sale price of the security.}.

In adopting the uptick rule, the SEC sought to achieve three objectives%
\footnote{%
Quoted from the Securities Exchange Act Release No. 13091 (December 21,
1976), 41 FR 56530 (1976 Release).}:

\begin{itemize}
\item[(i)] \textit{allowing relatively unrestricted short selling in an
advancing market;}

\item[(ii)] \textit{preventing short selling at successively lower prices,
thus eliminating short selling as a tool for driving the market down; and}

\item[(iii)] \textit{preventing short sellers from accelerating a declining
market by exhausting all remaining bids at one price level, causing
successively lower prices to be established by long sellers.}
\end{itemize}

The last two objectives have been partially confirmed by the empirical analysis (see, e.g., \citep{AlexanderPeterson1999} and reference therein).
Instead, the regulation does not seem to be effective in producing the first desired effect.
The observed number of executed short sales is indeed lower under uptick rule than in the unconstrained case, during phases with an upward market trend, see again \citep{AlexanderPeterson1999}.
This is due to the asynchrony between placement and execution of a short-sell order, since the rising of the price in between these two operations can make the trade not feasible under the uptick rule.

Moreover, empirical evidence provides uniform support of
the idea that short selling restrictions often cause share prices to rise.
From a theoretical point of view, there is no clear argument for explaining
this mispricing effect of the uptick rule. According to \citep{Miller1977}
this is due to a reduction in stock supply owing to the short sale restriction.
More generally, theoretical models with heterogeneous agents and differences in trading strategies
support the idea that share values become overvalued under short selling
restrictions due to the fact that ``pessimistic'' and ``bear'' traders
(expecting negative price movements) are ruled out of the market
(see, e.g., \citep{HarrisonKreps1978}). In contrast, theoretical models based on the
assumption that all agents have rational expectations suggest that short
selling restrictions do not change on the average the stock prices
(see, e.g., \citep{DiamondVerecchia1987}).

However, given the complexity of the phenomena, and the impossibility of
isolating the effects of a regulation from other concomitant changes in the
economic scenario, the effectiveness of the uptick rule in meeting the three
above objectives, and its possible side effects on shares' prices,
are still far from being completely clarified.

Guided by the aim to provide further insight on the argument, this paper
studies the effects on share prices in an artificial market of a short
selling restriction based on a tick test similar to the one imposed by the
uptick rule in real financial markets. Using an artificial asset pricing
model makes it easier in assessing the effects of the uptick rule in
isolation from other exogenous shocks, though artificial modeling
necessarily trades realism for mathematical tractability.

We consider an asset pricing model of adaptive rational equilibrium dynamics (A.R.E.D.), where heterogeneous beliefs on the future prices of a risky
asset, together with traders' adaptability based on past performances, have shown
to endogenously sustain price fluctuations.
Asset pricing models of A.R.E.D. (hereafter referred to simply as ARED asset pricing models) are discrete-time dynamical systems 
based on the empirical evidence that investors with different trading strategies coexist in the financial market
(see, e.g., \citep{TaylorAllen1992}). These simple models provide a theoretical justification for many ``stylized facts'' observed in
the real financial time series, such as, financial bubbles and volatility clustering (see \citep{Gaunersdorfer2001}, and \citep{GaunersdorferHommesWagener2008}).
Stochastic models based on the same assumptions are even used to study
exchange rate volatility and the implication of some specific financial policies
(see, e.g., \citep{Westerhoff2001}).

We extend, in particular, the deterministic model introduced in \citep{BrockHommes1998},
where, in the simplest case, agents choose between two predictors of future prices of a risky asset, i.e. a fundamental predictor and a non-fundamental predictor. Agents that adopt the fundamental predictor are called {\it fundametalists}, while agents that adopt the non-fundamental predictor are called {\it noise traders} or non-fundamental traders.
Fundamentalists believe that the price of a financial asset is determined by its fundamental value (as given by the present discounted value
of the stream of future dividends, see \citep{Hommes2001}) and any deviation
from this value is only temporary.
Non-fundamental traders, sometimes called {\it chartists} or {\it technical analysts}, believe that the future price of a risky asset is not completely determined by fundamentals and it can be predicted by simple technical trading rules (see, e.g., \citep{Elder1993}, \citep{Murphy1999}, and \citep{Neely1997}).


In the model, agents revise their "beliefs", prediction to be adopted, according to an evolutionary mechanism based on the past realized profits.
As a result, the fundamental value is a fixed point of the price dynamics,
as, once there, both fundamentalists and non-fundamental traders predict
the fundamental price. As long as the sensitivity of traders in switching
to the best performing predictor is relatively low, the fundamental
equilibrium is stable, but the fundamental stability is typically lost at higher intensities of
the traders' choice across the predictors, making room for financial bubbles.

It is worth to remember that \citep{BrockHommes1998} investigated the peculiar case of zero supply of
outside shares. Under this assumption each bought share is sold short. We therefore
consider a positive supply of outside share, that is essential to ensure financial
transactions when short selling is forbidden%
\footnote{%
We consider a positive supply of outside shares for the asset pricing model under Walrasian market clearing at each period.
A similar model under the market maker scenario has been considered by \citep{HommesHuangWang2005}.}.
Moreover, we pair the fundamental predictor with first a technical linear predictor and then with a technical nonlinear
predictor and compare the results obtained with and without the uptick rule.

As linear predictor, we consider the chartist predictor introduced in \citep{BrockHommes1998}. This facilitates the comparison of our results with those in \citep{BrockHommes1998} and related papers.
As nonlinear predictor, we introduce a new predictor, "Smoothed Price Rate Of Change" or S-ROC predictor, that extrapolates future prices by applying the rate of change averaged on past prices with a confidence mechanism smoothing out extreme unrealistic rates (for an overview of this class of predictors, see \citep{Elder1993}).


For what concerns the implementation of the regulation, we implement the uptick rule as it was in its original formulation, i.e., short
selling is allowed only on an uptick. Note, however, that in an artificial
asset pricing model a zero-tick is possible only at equilibrium, so that
allowing or forbidding short sales on zero-plus-ticks makes basically no
change in the observed price dynamics. In fact, with a positive supply of
shares, traders take long positions at the fundamental equilibrium,
so only the non-fundamental equilibria at which one type of trader is prohibited to go
short are affected by the rule behavior on zero-plus-ticks (moreover, such equilibria
are irrelevant to study the global price dynamics, as will be explained in
Section \ref{ssec:ARED_uptick}).

From the mathematical point of view, the uptick rule makes the asset pricing
model a piecewise-smooth dynamical system\footnote{To be precise, the model is a piecewise-continuous dynamical system. However, the class of piecewise-smooth dynamical systems contains the class of piecewise-continuous dynamical systems.}, namely a system in which
different mathematical rules can be applied to determine the next price, and
the rule to be applied depends on the current state of the system, that is,
on the fact that trader types are interested in going short and whether short selling is allowed or not. Non-smooth dynamical systems are certainly more
problematic to analyze, both analytically and numerically 
(though non-smooth dynamics is a very active topic in current research, see \citep{Bernardo08}, and \citep{Colombo11b}, and
references therein)
so we will limit the analytical treatment to stationary solutions.

Piecewise-smooth dynamical systems have already been used as models in
finance. \citep{TramontanaGardiniWesterhoff2011} proposed a one-dimensional
piecewise-linear asset pricing model, where traders adopt different buying and selling
strategies in response to different market movements.
Other examples can be found in \citep{TramontanaGardiniWesterhoff2010}, \citep{TramontanaWesterhoffGardini2010}, and
\citep{TramontanaWesterhoff2012}.
Two ARED piecewise-smooth systems modeling short selling restrictions have been also proposed. Modifying the model in \citep{BrockHommes1998},
\citep{AnufrievTuinstra2009} restricted short selling by allowing limited short positions at each
trading period, whereas \citep{DercoleCecchetto2010} investigated the
complete ban on short selling. Thus both contributions implement short
selling restrictions that are not based on price tests.

The results of our theoretical analyses are in line with the empirical
evidence. The sale price that is established in our model when one trader
type is prohibited from going short is indeed systematically higher than the
unconstrained price. Thus, constrained downward movements below the
fundamental value are less pronounced, whereas constrained upward movements
above the fundamental value can be larger.
We provide a more complete explanation for this effect, suggesting that it is due to the combination of two mechanisms:
on one side, the short selling restriction reduces the possibility for pessimistic or bear traders to bet on downward movements
below the fundamental value,
avoiding excessive underpricing, but at the same time, when prices are above the fundamental value,
the restriction reduces the possibility for fundamentalists to drive down the prices back to the fundamental value by opening short positions.
This is in agreement with the last two goals established by the SEC (see above).
The first stated objective of the uptick is always realized in our model, since the market clearing is assumed to be synchronous among all traders. 

When non-fundamental traders adopt the S-ROC predictor, we observe that the overpricing due to the uptick rule disappears due to the smoothness of the predictor that makes non-fundamental traders not confident with extreme price deviations from the fundamental value. Indeed, the expectations of large price deviations produced by the uptick rule force the non-fundamental trader to believe in the fundamental value with the effect of reducing, instead of increasing, the price deviations.
The stabilizing effect however vanishes when traders become highly sensitive in switching to the strategy with best recent performance.

The paper is organized as follows. Section~\ref{ssec:ARED_no_uptick} briefly
reviews the unconstrained asset pricing model, summarizing from %
\citep{BrockHommes1998} and setting the notation and most of the modeling
equations that will be used in next Sections. Section~\ref{ssec:fund} is
also preliminary and recaps the concept of fundamental equilibrium,
including its stability analysis and some new results. Section~\ref%
{ssec:ARED_uptick} formulates the piecewise-smooth model constrained by the
uptick rule, and discusses the existence and stability of fundamental and
non-fundamental equilibria. So far, no explicit price predictors is
introduced, whereas Section \ref{ssec:pre} presents the price predictors
for which the unconstrained and constrained models will be studied and
compared in Sections~\ref{sec:ana} and~\ref{sec:num}. Section~\ref{sec:ana}
presents the analytical results concerning the existence and stability of
fixed points. Some of the results concerning the unconstrained model are new
and interesting per se. 
Section~\ref{sec:num} presents a series of numerical tests, confirming the
analytical results and investigating non-stationary (periodic, quasi-periodic, and chaotic)
regimes. In Section~\ref{sec:ed} we discuss in detail our economic findings.
Section~\ref{sec:cd} concludes and lists a series of related interesting topics for further research.
All the analytical results presented in Sections~\ref{sec:ARED} and~\ref{sec:ana}
are proved in Appendix \ref{Ch4Appendix}.

\section{The ARED asset pricing model with and without the uptick rule}\label{sec:om}
\label{sec:ARED}
We consider the asset pricing model with heterogeneous beliefs and adaptive traders introduced by \citep{BrockHommes1998}.
While in the original model a zero supply of outside shares was considered, making short selling essential to ensure the exchanges,
we consider the case of positive supply, so that short selling will no longer be necessary and a constraint on it can be imposed.
In this generalized version of the original model, we introduce a negative demand constraint according to the uptick rule,
in order to study the effects of this regulation on price fluctuations.

\subsection{The unconstrained ARED asset pricing model}
\label{ssec:ARED_no_uptick}
Consider a financial market where traders invest either in a
single risky asset, supplied in $S$ shares \footnote{$S$ is in fact the supply of traded assets in each period. Obviously when short selling is allowed assets are borrowed outside the pool of this $S$ shares making the total supply higher than $S$.}
of (ex-dividend) price $p_{t}$ at period $t$, or in a risk free asset perfectly elastically supplied at gross return $R$ (where $R=1+r$, with $r\in(0,1)$). The risky asset pays random dividend $\tilde{y}_{t}$ in period $t$, where the divided process $\tilde{y}_{t}$ is IID (Identically Independently Distributed) with $E_{t}\left[\tilde{y}_{t+1}\right]=\bar{y}$ constant. Thus, denoting by
$W_{h,t}$ the economic wealth of a generic trader of type $h$ at the beginning of period
$t$, and by $z_{h,t}$ the number of shares held by the trader in period $t$, we
have the following wealth equation (or individual intertemporal budget constraint):
\begin{linenomath}
\begin{equation}
\label{eq:w}
\tilde{W}_{h,t+1}=R(W_{h,t}-p_t z_{h,t})+\tilde{p}_{t+1}z_{h,t}+\tilde{y}_{t+1}z_{h,t}=
RW_{h,t}+(\tilde{p}_{t+1}+\tilde{y}_{t+1}-R\hspace{0.2mm}p_t)z_{h,t},
\end{equation}
\end{linenomath}
where tilde denotes random variables, $W_{h,t}-p_t z_{h,t}$ is the amount of money invested in the risk free asset in period $t$
and $\tilde{R}_{t+1}=\tilde{p}_{t+1}+\tilde{y}_{t+1}-R\hspace{0.2mm}p_t$ is the excess return per share realized at the end of the period.

Let $E_{h,t},V_{h,t}$ denote the "beliefs" of investor of type $h$ about the conditional expectation and conditional variance of wealth. They are assumed to be functions of past prices and dividends. We assume that each investor type is a myopic mean variance maximizer, so for type $h$ the demand for shares $z_{h,t}$ solves
\begin{equation*}
\max_{z_{h,t}}\left\{E_{h,t}\left(\tilde{W}_{t+1}\right)-\frac{a}{2}V_{h,t}\left(\tilde{W}_{t+1}\right)\right\}
\end{equation*}
i.e.,
\begin{linenomath}
$$
\ds z_{h,t}\left(p_{t}\right)=\Frac{E_{h,t}[\tilde{R}_{t+1}]}{a V_{h,t}[\tilde{R}_{t+1}]}=
\Frac{E_{h,t}[\tilde{p}_{t+1}+\tilde{y}_{t+1}]-R\hspace{0.2mm}p_t}{a V_{h,t}[\tilde{R}_{t+1}]},
$$
\end{linenomath}
where $a$ is the risk aversion coefficient and $p_t$ is to be determined by the market clearing between all demands and the supply $S$ of shares.
For simplicity (as done in \citep{BrockHommes1998}, see \citep{Gaunersdorfer00}, for an extension),
we assume that traders have common and constant beliefs about the variance, i.e. $V_{h,t}\left[\tilde{R}_{t+1}\right]=\sigma^{2}$, $\forall h$, and common and correct beliefs about the dividend, i.e. $E_{h,t}\left[\tilde{y}_{t+1}\right]=E_{t}\left[\tilde{y}_{t+1}\right]=\bar{y}$, $\forall h$.
Moreover, the number $N$ of traders and $S$ of supplied shares in each period (not considering the extra supply of shares due to short sales) are kept constant. Let $H$ be the number of available "beliefs" or price predictors $E_{h,t}[\tilde{p}_{t+1}+\tilde{y}_{t+1}]$, $h=1,\ldots,H$, each obtained at a cost $C_h$, and denote by $n_{h,t}$ the fraction of traders adopting predictor $h$ in
period $t$, the market clearing imposes
\begin{linenomath}
\begin{equation}
\label{eq:mc}
\ds N\sum_{h=1}^Hn_{h,t}z_{h,t}(p_t)=S,\quad
z_{h,t}(p_t)=\frac{\ds E_{h,t}[\tilde{p}_{t+1}+\tilde{y}_{t+1}]-R\hspace{0.2mm}p_t}{\ds a\hspace{0.2mm}\sigma^2},
\end{equation}
\end{linenomath}
which is solved for $p_t$, thus obtaining
\begin{linenomath}
\begin{equation}
\label{eq:pt}
p_{t}=\Frac{1}{R}\left(\sum_{h=1}^{H}n_{h,t}E_{h,t}[\tilde{p}_{t+1}+\tilde{y}_{t+1}]-a\hspace{0.2mm}\sigma^{2}\Frac{S}{N}\right),
\end{equation}
\end{linenomath}
Let us substitute the expression of $p_{t}$ in $z_{h,t}\left(p_{t}\right)$, $\forall h \in H$, to obtain the actual demands
\begin{linenomath}
\begin{equation}
\label{eq:ad}
z_{h,t}=\Frac{1}{a\hspace{0.2mm}\sigma^2}\left(E_{h,t}[\tilde{p}_{t+1}+\tilde{y}_{t+1}]-\sum_{k=1}^{H}n_{k,t}E_{k,t}[\tilde{p}_{t+1}+\tilde{y}_{t+1}]\right)+\Frac{S}{N}
\end{equation}
\end{linenomath}
(no longer functions of the price $p_t$), and let us use $p_{t}$ to calculate the net profits $R_t z_{h,t-1}-C_h$, $h=1,\ldots,H$, realized in period $t$.

Eq.~\eqref{eq:ad} gives the number of shares held by a trader of type $h$ in period $t$.
If negative, the trader is in a short position. If positive, the trader is in a long position.

At this point, the fractions $n_{h,t+1}$, $h=1,...,H$, for the next period are determined as functions of the positions of the traders and of the last available net profits.
In particular, the following discrete choice model is used:
\begin{linenomath}
\begin{equation}
\label{eq:cm}
n_{h,t+1}=\frac{\exp\left(\beta(R_t z_{h,t-1}-C_h)\right)}
{\sum_{k=1}^H\exp\left(\beta(R_t z_{k,t-1}-C_k)\right)}, h=1,...,H-1,
\end{equation}
\end{linenomath}
where $\beta$ measures the intensity of traders' choice across predictors (traders' adaptability).
The above procedure can then be iterated to compute the next price $p_{t+1}$.

If all agents have common beliefs on the future prices, i.e. $E_{h,t}=E_{t}$ $\forall h$, the pricing equation (\ref{eq:pt}) reduces to
\begin{equation*}
Rp_{t}=E_{t}[\tilde{p}_{t+1}+\tilde{y}_{t+1}]-a\hspace{0.2mm}\sigma^{2}\Frac{S}{N}.
\end{equation*} 
This equation admits a unique solution $\tilde{p}^{*}_{t}\equiv \bar{p}$, where
\begin{linenomath}
\begin{equation}
\label{eq:pbar}
\bar{p}=\Frac{\bar{y}-a\sigma^2 S/N}{R-1},
\end{equation}
\end{linenomath}
that satisfies the "no bubbles" condition $\lim_{t\rightarrow\infty}\left(E\tilde{p}^{*}_{t}/R^{t}\right)=0$. This price, given as the discounted sum of expected future dividends, would prevail in a perfectly rational world and is called the {\it fundamental price} (see, e.g., \citep{Hommes2001,HommesHuangWang2005}).
Of course, we assume $\bar{p}>0$, i.e., sufficiently high dividend $\bar{y}$ or limited supply of ouside shares per investor $S/N$.
Given the assumptions about the dividend process and the fundamental price and focusing only on the deterministic skeleton of the model, i.e. $\tilde{y}_{t}=\bar{y}$ $\forall t$, we have that $E_{h,t}[\tilde{p}_{t+1}+\tilde{y}_{t+1}]=E_h[p_{t+1}]+\bar{y}$, where the price predictors $E_h[p_{t+1}]$, $h=1,\ldots,H$,
are deterministic functions of $L$ known past prices $\{p_{t-1},p_{t-2},\dots,p_{t-L}\}$, $L\ge 1$.

It is useful to rewrite the model in terms of price deviations from a benchmark price $\bar{p}$. In the following, let $s=S/N$ and denote by $x_t$ the price deviation from the fundamental value, i.e., $x_t=p_t-\bar{p}$.
Defining the traders' beliefs on the next deviation $x_{t+1}$ as $f_h(\mathbf{x}_{t})=E_h[p_{t+1}]-\bar{p}$,
with $\mathbf{x}_{t}=(x_{t-1},x_{t-2},\dots,x_{t-L})$ being the vector of the last $L$ available deviations,
the demand functions to be used in the market clearing in Eq.~\eqref{eq:mc} become
\begin{linenomath}
\begin{equation}
\label{eq:zx}
z_{h,t}(x_t)=\Frac{f_{h}(\mathbf{x}_{t})-R\hspace{0.2mm}x_t}{a\hspace{0.2mm}\sigma^2}+s,
\end{equation}
\end{linenomath}
while the pricing equation \eqref{eq:pt} and the actual demands \eqref{eq:ad} can be written in deviations as
\begin{linenomath}
\begin{equation}
\label{eq:xt}
x_{t}=\Frac{1}{R}\sum_{h=1}^{H}n_{h,t}\hspace{0.2mm}f_{h}(\mathbf{x}_{t})\quad\text{and}\quad
z_{h,t}=\Frac{1}{a\hspace{0.2mm}\sigma^2}\left(f_{h}(\mathbf{x}_{t})-\sum_{k=1}^{H}n_{k,t}f_{k}(\mathbf{x}_{t})\right)+s,
\end{equation}
\end{linenomath}
and the excess of return in \eqref{eq:cm} can be expressed in deviations as
\begin{linenomath}
\begin{equation}
\label{eq:Rx}
R_t=x_t-R\hspace{0.2mm}x_{t-1}+\delta_{t}+a\hspace{0.2mm}\sigma^2s.
\end{equation}
\end{linenomath}
where $\delta_{t}=y_{t}-\bar{y}$ is a shock due to the dividend realization. As mentioned above, we focus on the deterministic skeleton of the model, i.e., we fix $\delta_{t}=0$ $\forall t$.


Substituting Eqs.~\eqref{eq:zx} and~\eqref{eq:Rx} into~\eqref{eq:cm} and coupling the pricing equation in \eqref{eq:xt} with \eqref{eq:cm}, the ARED model can be rewritten as
\begin{linenomath}
\begin{subequations}
\label{ARED_model_no_uptick}
\footnotesize
\begin{eqnarray}
\label{ARED_model_no_uptick_x}
x_{t} & = & \Frac{1}{R}\sum_{h=1}^{H}n_{h,t}\hspace{0.2mm}f_{h}(\mathbf{x}_{t}),\\
\label{ARED_model_no_uptick_n}
n_{h,t+1} & = & \Frac{\exp\left(\beta\left((x_{t}-R\hspace{0.2mm}x_{t-1}+a\hspace{0.2mm}\sigma^2s)\left(\Frac{f_{h}(\mathbf{x}_{t-1})-R\hspace{0.2mm}x_{t-1}}{a\hspace{0.2mm}\sigma^2}+s\right)-C_h\right)\right)}
{\sum_{k=1}^H\exp\left(\beta\left((x_{t}-R\hspace{0.2mm}x_{t-1}+a\hspace{0.2mm}\sigma^2s)\left(\Frac{f_{k}(\mathbf{x}_{t-1})-R\hspace{0.2mm}x_{t-1}}{a\hspace{0.2mm}\sigma^2}+s\right)-C_k\right)\right)},\;h=1,\ldots,H-1\nonumber\\
\end{eqnarray}
\end{subequations}
\end{linenomath}
(recall that $\sum_{h=1}^Hn_{h,t}=1$).
Given the current composition $n_{h,t}$ of the traders' population, the first equation computes the price deviation for period $t$, while the second updates the traders' fractions for the next period.
The past deviations $(x_{t-1},x_{t-2},\dots,x_{t-(L+1)})$ appearing in vectors $\mathbf{x}_{t}$ and $\mathbf{x}_{t-1}$, together with the fractions $n_{h,t}$, $h=1,\ldots,H-1$, constitute the state of the system\footnote{%
Note that, by writing Eq.~\eqref{ARED_model_no_uptick_n} for $n_{h,t}$ and substituting it into Eq.~\eqref{ARED_model_no_uptick_x}, one can write $x_t$ as a recursion on the last $L+2$ deviations.
This gives a more compact and homogeneous system's state ($L+2$ price deviations instead of $L+1$ deviations and $H-1$ traders' fractions),
however, the formulation \eqref{ARED_model_no_uptick} is physically more appropriate and easier to initialize.}.

The initial condition is composed of the opening price deviation $x_0$ and of the traders' fractions $n_{h,1}$, $h=1,...,H$, to be used in the first period.
In fact, assuming that each price predictor can be customized to the case when the number of past available prices is less then $L$,
then Eq.~\eqref{ARED_model_no_uptick_x} can be applied at $t=1$ (to determine the price deviation $x_1$ in period $1$), whereas Eq.s~\eqref{ARED_model_no_uptick_n} can only be applied at $t=2$, so that $n_{h,2}=n_{h,1}$ is used.
For $t>L$ the price predictors in Eqs.~\eqref{ARED_model_no_uptick} can be regularly applied.

Note that Eq.~\eqref{ARED_model_no_uptick_x} guarantees a positive price for any period $t$ (i.e., $x_t>-\bar{p}$),
provided all price predictions are such ($f_{h}(\mathbf{x}_{t})>-\bar{p}$ for all $h=1,\ldots,H$).

When there are only two types of traders, $H=2$, it is convenient to express the
fractions $n_{1,t}$ and $n_{2,t}$ as a function of $m_{t}=n_{1,t}-n_{2,t}\in(-1,1)$, i.e.,
\begin{linenomath}
\begin{equation}
\label{eq:m}
n_{1,t}=\Frac{1+m_{t}}{2}\quad\text{and}\quad n_{2,t}=\Frac{1-m_{t}}{2}.
\end{equation}
\end{linenomath}
In this specific case, model \eqref{ARED_model_no_uptick} can be rewritten as
\begin{linenomath}
\begin{subequations}
\label{ARED_model_no_uptick_2}
\begin{eqnarray}
\label{ARED_model_no_uptick_2x}
x_{t} & = & \Frac{1}{2R}\left((1+m_{t})f_1(\mathbf{x}_{t})+(1-m_{t})f_2(\mathbf{x}_{t})\right),\\
\label{ARED_model_no_uptick_2m}
m_{t+1} & = & \tanh\left(\Frac{\beta}{2}\left((x_{t}-R\hspace{0.2mm}x_{t-1}+a\hspace{0.2mm}\sigma^{2}s)
\Frac{f_1(\mathbf{x}_{t-1})-f_2(\mathbf{x}_{t-1})}{a\hspace{0.2mm}\sigma^{2}}-(C_1-C_2)\right)\right),\nonumber\\
\end{eqnarray}
\end{subequations}
\end{linenomath}
where $(x_{t-1},x_{t-2},\dots,x_{t-(L+1)},m_t)$ is the system's state and $(x_0,m_1)$ identifies the initial condition.

\subsection{The fundamental equilibrium}
\label{ssec:fund}
The following lemma gives the condition under which the fundamental price is an equilibrium of model \eqref{ARED_model_no_uptick} (or model \eqref{ARED_model_no_uptick_2} when $H=2$):
\begin{lemma}
\label{lm:fund}
If all predictors satisfy $f_h(\mathbf{0})=0$, $h=1,\ldots,H$, with $\mathbf{0}$ the vector of $L$ zeros, then
$(\bar{x}^{(0)},\bar{n}^{(0)}_h)$ with
\begin{linenomath}
$$
\bar{x}^{(0)}=0\quad\text{and}\quad
\bar{n}^{(0)}_h=\frac{\exp\left(-\hspace{-0.2mm}\beta C_h)\right)}{\sum_{k=1}^H \exp\left(-\hspace{-0.2mm}\beta C_k)\right)}
$$
\end{linenomath}
[or $(\bar{x}^{(0)},\bar{m}^{(0)})$ with $\bar{m}^{(0)}=\tanh\left(-\beta/2\,(C_1-C_2)\right)$ if $H=2$]
is a fixed point of model \eqref{ARED_model_no_uptick} [\eqref{ARED_model_no_uptick_2}], at which all strategies equally demand $\bar{z}^{(0)}_h=s$.
We call this steady state \emph{fundamental equilibrium}.
$H$ of the associated eigenvalues are zero and the remaining $L$ ones are the roots of the characteristic equation
\begin{linenomath}
$$
\lambda^{L}-\gamma_{1}\lambda^{L-1}+\cdots-\gamma_{L}=0,\quad
\gamma_{i}=\Frac{1}{R}\sum_{h=1}^H\bar{n}_h\hspace{-1.0mm}\left.\Frac{\partial}{\partial x_{t-i}}f_h(\mathbf{x}_{t})\right|_{\mathbf{x}_{t}=\mathbf{0}},\quad i=1,\ldots,L.
$$
\end{linenomath}
\end{lemma}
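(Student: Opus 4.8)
The plan is to verify the fixed‑point property by direct substitution, read off the equilibrium demands, and then obtain the spectrum by linearizing map~\eqref{ARED_model_no_uptick} at the claimed steady state; the case $H=2$ follows because the affine bijection~\eqref{eq:m}, $m_t=n_{1,t}-n_{2,t}$, conjugates \eqref{ARED_model_no_uptick_2} to \eqref{ARED_model_no_uptick} with $H=2$ and hence preserves eigenvalues. For the fixed point, substitute $x_{t-1}=\dots=x_{t-(L+1)}=0$ and $n_{h,t}=\bar n^{(0)}_h$. Since $f_h(\mathbf{0})=0$ for every $h$, Eq.~\eqref{ARED_model_no_uptick_x} returns $x_t=\tfrac1R\sum_h\bar n^{(0)}_h f_h(\mathbf{0})=0$. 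Putting $x_t=x_{t-1}=0$ and $f_h(\mathbf{x}_{t-1})=0$ into the exponents of Eq.~\eqref{ARED_model_no_uptick_n}, each bracket collapses to $a\sigma^2 s\cdot s-C_h$, so the common factor $\exp(\beta a\sigma^2 s^2)$ cancels between numerator and denominator and leaves $n_{h,t+1}=\exp(-\beta C_h)/\sum_k\exp(-\beta C_k)=\bar n^{(0)}_h$; for $H=2$ the identity $\tanh u=(e^{u}-e^{-u})/(e^{u}+e^{-u})$ with $u=-\tfrac\beta2(C_1-C_2)$ confirms $\bar m^{(0)}=\bar n^{(0)}_1-\bar n^{(0)}_2$. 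Evaluating the actual demand in~\eqref{eq:xt} (equivalently~\eqref{eq:zx}) at $\mathbf{x}_t=\mathbf{0}$ gives $\bar z^{(0)}_h=\tfrac1{a\sigma^2}\big(f_h(\mathbf{0})-\sum_k\bar n^{(0)}_k f_k(\mathbf{0})\big)+s=s$.

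For the spectrum, take state coordinates $(x_{t-1},\dots,x_{t-(L+1)},n_{1,t},\dots,n_{H-1,t})$, with $n_{H,t}=1-\sum_{h<H}n_{h,t}$, and differentiate the map at the fixed point. From~\eqref{ARED_model_no_uptick_x}, $\partial x_t/\partial x_{t-i}|_*=\tfrac1R\sum_h\bar n^{(0)}_h\,\partial_{x_{t-i}}f_h(\mathbf{0})=\gamma_i$ for $i=1,\dots,L$, while $\partial x_t/\partial x_{t-(L+1)}|_*=0$ (that deviation does not appear in $\mathbf{x}_t$) and $\partial x_t/\partial n_{h,t}|_*=\tfrac1R\big(f_h(\mathbf{0})-f_H(\mathbf{0})\big)=0$. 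Writing the exponent in~\eqref{ARED_model_no_uptick_n} as $\pi_l=A\,B_l-C_l$ with $A=x_t-Rx_{t-1}+a\sigma^2 s$ and $B_l=(f_l(\mathbf{x}_{t-1})-Rx_{t-1})/(a\sigma^2)+s$, one has $A|_*=a\sigma^2 s$ and $B_l|_*=s$ independent of $l$, so for every state coordinate $v$ the derivative $\partial_v\pi_l|_*$ equals a quantity not depending on $l$ plus $s\,\partial_v f_l(\mathbf{x}_{t-1})|_*$. Since the softmax derivative is $\partial n_{h,t+1}/\partial\pi_l|_*=\beta\bar n^{(0)}_h(\delta_{hl}-\bar n^{(0)}_l)$ and $\sum_{l=1}^H(\delta_{hl}-\bar n^{(0)}_l)=0$, the $l$‑independent part of $\partial_v\pi_l$ drops out and
$$\left.\frac{\partial n_{h,t+1}}{\partial v}\right|_*=\beta s\,\bar n^{(0)}_h\Big(\partial_v f_h(\mathbf{x}_{t-1})-\sum_{l=1}^H\bar n^{(0)}_l\,\partial_v f_l(\mathbf{x}_{t-1})\Big)\Big|_*,$$
which vanishes for $v=x_{t-1}$ and for every $v=n_{k,t}$ (none of these enter $f_l(\mathbf{x}_{t-1})$) and can be nonzero only for $v\in\{x_{t-2},\dots,x_{t-(L+1)}\}$.

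Consequently the Jacobian $J$ has its last $H-1$ columns identically zero, whence $\det(\lambda I-J)=\lambda^{H-1}\det(\lambda I_{L+1}-M)$, where $M$ is the $(L+1)\times(L+1)$ block on the deviation coordinates: first row $(\gamma_1,\dots,\gamma_L,0)$, rows $2,\dots,L+1$ the backward shift, and last column zero. Expanding $\det(\lambda I_{L+1}-M)$ along its zero last column extracts a further factor $\lambda$ and leaves the $L\times L$ companion matrix with first row $(\gamma_1,\dots,\gamma_L)$ and unit subdiagonal, whose characteristic polynomial is $\lambda^{L}-\gamma_1\lambda^{L-1}-\dots-\gamma_L$. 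Hence $\det(\lambda I-J)=\lambda^{H}\big(\lambda^{L}-\gamma_1\lambda^{L-1}-\dots-\gamma_L\big)$: exactly $H$ zero eigenvalues and $L$ eigenvalues solving the characteristic equation of the statement. I expect the main obstacle to be the bookkeeping in the linearization: one must check carefully that the part of each $\partial_v\pi_l$ not depending on $l$ is annihilated by the softmax, because this — and not merely the $H-1$ obvious zeros from the vanishing columns — is what yields the full algebraic multiplicity $H$ of the zero eigenvalue and the decoupling of the price recursion from the fractions.
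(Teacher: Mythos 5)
Your proof is correct, but it takes a different route from the paper's. The paper first substitutes the fraction-update equation into the pricing equation, so that $x_t$ becomes a recursion on the last $L+2$ deviations $(x_{t-1},\dots,x_{t-(L+2)})$; at the fundamental equilibrium the terms involving $\partial n_{h,t}/\partial x_{t-i}$ are killed by $f_h(\mathbf{0})=0$, and the Jacobian is immediately an $(L+2)\times(L+2)$ companion-type matrix with top row $(\gamma_1,\dots,\gamma_L,0,0)$, whose characteristic polynomial is $\lambda^{2}\bigl(\lambda^{L}-\gamma_1\lambda^{L-1}-\cdots-\gamma_L\bigr)$ — no softmax derivatives ever appear. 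You instead linearize the full $(L+H)$-dimensional map in the original coordinates $(x_{t-1},\dots,x_{t-(L+1)},n_{1,t},\dots,n_{H-1,t})$ and obtain the factorization $\lambda^{H}\bigl(\lambda^{L}-\gamma_1\lambda^{L-1}-\cdots-\gamma_L\bigr)$ from the block-triangular structure. What your version buys is that it exhibits \emph{all} $H$ zero eigenvalues directly for arbitrary $H$ (the paper's reduced Jacobian displays only two, which matches the full state dimension only when $H=2$, the case it mainly uses); what the paper's version buys is brevity, since eliminating the fractions from the state makes the careful bookkeeping of $\partial n_{h,t+1}/\partial(\cdot)$ unnecessary. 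Two minor remarks: (i) your closing emphasis is slightly misplaced — the multiplicity-$H$ zero follows already from the vanishing of the $n_{k,t}$-columns (which holds because $\partial x_t/\partial n_{k,t}$ and hence $\partial_{n_{k,t}}\pi_l$ vanish at the fixed point, without invoking the softmax identity) together with the zero $x_{t-(L+1)}$-column of the deviation block; the softmax cancellation only determines the entries of the lower-left block $N$, which never enters $\det(\lambda I-J)$; (ii) your polynomial $\lambda^{L}-\gamma_1\lambda^{L-1}-\cdots-\gamma_L$ has all minus signs, which is indeed what the lemma intends (compare the paper's use of $\lambda^{2}-\gamma_1\lambda-\gamma_2=0$ in the proof of Lemma~\ref{lm:roc}), despite the ambiguous ``$+\cdots-$'' in the statement.
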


Lemma~\ref{lm:fund} also reveals that the price dynamics is not reversible, at least locally to the fundamental equilibrium (due to the presence of zero eigenvalues), so that prices cannot be reconstructed backward in time.

We now state a simple condition that rules out the possibility of other equilibria:
\begin{lemma}
\label{lm:neq}
If $f_h(\bar{x}\mathbf{1})/\bar{x}<R$ [or if $f_h(\bar{x}\mathbf{1})/\bar{x}>R$] for all $h=1,\ldots,H$ and $\bar{x}\neq 0$, with $\mathbf{1}$ the vector of $L$ ones,
then the fundamental equilibrium is the only fixed point of model \eqref{ARED_model_no_uptick}.
\end{lemma}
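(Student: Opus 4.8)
The plan is to show that every fixed point of \eqref{ARED_model_no_uptick} (resp.\ of \eqref{ARED_model_no_uptick_2} when $H=2$) has zero price deviation, and then to identify it with the fundamental equilibrium of Lemma~\ref{lm:fund}. First I would characterize the fixed points: a state $(x_{t-1},x_{t-2},\dots,x_{t-(L+1)},n_{h,t})$ is fixed exactly when all its lagged-price components equal a common value $\bar{x}$ and the fractions are constant, $n_{h,t}=\bar{n}_h$. Note that $\bar{n}_h>0$ and $\sum_{h=1}^{H}\bar{n}_h=1$: positivity and the normalization are immediate from the exponential form \eqref{ARED_model_no_uptick_n} (or, when $H=2$, from $\bar{n}_1=(1+\bar{m})/2$, $\bar{n}_2=(1-\bar{m})/2$ with $\bar{m}=\tanh(\cdot)\in(-1,1)$). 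In particular $\mathbf{x}_t=\bar{x}\mathbf{1}$, so the pricing equation \eqref{ARED_model_no_uptick_x} evaluated at the fixed point reads
\begin{linenomath}
\begin{equation*}
R\hspace{0.2mm}\bar{x}=\sum_{h=1}^{H}\bar{n}_{h}\,f_{h}(\bar{x}\mathbf{1}).
\end{equation*}
\end{linenomath}

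The core step is a convexity argument. Suppose, towards a contradiction, that $\bar{x}\neq 0$. Dividing the displayed identity by $\bar{x}$ writes $R$ as the convex combination $\sum_{h=1}^{H}\bar{n}_h\bigl(f_h(\bar{x}\mathbf{1})/\bar{x}\bigr)$ of the $H$ numbers $f_h(\bar{x}\mathbf{1})/\bar{x}$, with strictly positive weights $\bar{n}_h$ summing to one. Under the first alternative of the hypothesis each of these numbers is $<R$, hence so is their convex combination, contradicting the equality to $R$; under the second alternative each is $>R$ and the convex combination is likewise $>R$, the same contradiction. Therefore $\bar{x}=0$, i.e.\ $\mathbf{x}_t=\mathbf{0}$.

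Finally, with $\mathbf{x}_t=\mathbf{0}$ the predictors satisfy $f_h(\mathbf{0})=0$ (the standing normalization that makes the fundamental price an equilibrium; it is in any case forced by the hypothesis together with continuity of the predictors on letting $\bar{x}\to 0$, since $f_h(\mathbf{0})\neq 0$ would make $f_h(\bar{x}\mathbf{1})/\bar{x}$ unbounded of the wrong sign near $0$), so Lemma~\ref{lm:fund} applies: evaluating the update \eqref{ARED_model_no_uptick_n} at $\mathbf{x}_t=\mathbf{x}_{t-1}=\mathbf{0}$ determines $\bar{n}_h$ uniquely as the $\bar{n}^{(0)}_h$ of that lemma, and hence the unique fixed point is the fundamental equilibrium. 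The case $H=2$ is identical, using \eqref{ARED_model_no_uptick_2x} and~\eqref{ARED_model_no_uptick_2m} and $\bar{m}\in(-1,1)$ in place of the probability vector $(\bar{n}_h)$. I do not expect a genuine obstacle here; the only point needing a little care is verifying that the weights $\bar{n}_h$ form a bona fide probability vector with strictly positive entries, so that the strict inequalities on the ratios $f_h(\bar{x}\mathbf{1})/\bar{x}$ are inherited by their convex combination.
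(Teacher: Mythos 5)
Your proposal is correct and follows essentially the same route as the paper: you evaluate the pricing equation at a candidate fixed point with $\bar{x}\neq 0$, divide by $\bar{x}$, and observe that $R$ cannot equal a convex combination (with strictly positive weights $\bar{n}_h$) of ratios $f_h(\bar{x}\mathbf{1})/\bar{x}$ that all lie on one side of $R$, which is exactly the paper's argument. The extra bookkeeping you add (positivity of the fractions, the identification of the $\bar{x}=0$ fixed point with the fundamental equilibrium via Lemma~\ref{lm:fund}) only makes explicit what the paper leaves implicit.
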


As we will recall in Section~\ref{ssec:pre}, traders with price predictors such that $|f_h(\bar{x}\mathbf{1})/\bar{x}|<1$ believe that tomorrow's price will revert to its fundamental value ($x_t\to 0$), whereas, at an equilibrium, trend followers obviously extrapolate the equilibrium price, so their price predictors are such that $f_h(\bar{x}\mathbf{1})/\bar{x}=1$.
Lemma \ref{lm:neq} therefore shows that non-fundamental equilibria are possible only in the presence of at least one of the two mentioned types of traders and traders that believe that nonzero price deviations will amplify in the short run, even if they have been recently constant.


\subsection{The ARED asset pricing model constrained by the uptick rule}
\label{ssec:ARED_uptick}
When trading restrictions imposed by the uptick-rule are introduced, we must distinguish between two situations:
if prices are rising, e.g. we simply look at the last movement available $x_{t-1}-x_{t-2}$,
then short selling is allowed and the unconstrained model \eqref{ARED_model_no_uptick} still applies.
In contrast, in a downward (or stationary) movement ($x_{t-1}\le x_{t-2}$), traders' demands are forced to be non-negative, i.e.,
the demand functions to be used in the market clearing in Eq.~\eqref{eq:mc} are
\begin{linenomath}
\begin{equation}
\label{eq:zxc}
z_{h,t}(x_t)=\max\left\{0,
\Frac{f_{h}(\mathbf{x}_{t})-R\hspace{0.2mm}x_t}{a\hspace{0.2mm}\sigma^2}+s\right\}.
\end{equation}
\end{linenomath}
Note that the forward dynamics remain uniquely defined.
In fact, given the past price deviations in $\mathbf{x}_{t}$ and the traders' fractions $n_{h,t}$,
the per capita demand $d(x_{t})=\sum_{h=1}^Hn_{h,t}z_{h,t}(x_{t})$ is a piecewise-linear, continuous function
of the deviation $x_{t}$, that is decreasing up to the deviation at which it vanishes together
with the highest of the single agents' demand curves, and $d(x_{t})=0$ for larger deviations (see Figure~(\ref{fig:mc})).
There is therefore a unique deviation $x_t$ at which the market clears, i.e., $d(x_t)=s>0$.
Also note that, given the same traders' fractions, the constrained price\footnote{In this part of the paper we introduce the notation $x_{U,t}$ for the unconstrained price determined by model in Section \ref{ssec:ARED_no_uptick} to distinguish it from the constrained price $x_{t}$. Since there is no risk of confusion, this distinction is not made in other parts of the paper for the sake of avoiding cumbersome notations.} is higher than the unconstrained price\footnote{It is clear that all the traders' demand functions are always characterized by the same slope. However, the intercept of the demands with the $x=0$ axis, i.e. $s+\frac{f_{h}(\mathbf{x}_{t})}{a\hspace{0.2mm}\sigma^2}$, changes over time and it depends on the past share prices. For example if the predictor of a trader is based on $L$ past prices of the share, i.e. $\mathbf{x}_{t}=\left(x_{t-1},...,x_{t-L}\right)$, the intercept of its demand with the $x=0$ axis depends on all of these prices. Thus, to prove that the constrained prices are always higher than the unconstrained one given the same past prices, it does not necessarily mean a price dynamic characterized by larger fluctuations for the constrained model. The situation can be the opposite when we consider predictors based on a large number of past deviations and especially when they are non-linear. In other words, simple static considerations on the shape of the constrained demands do not help us to understand entirely the effect of the uptick rule on the price dynamics.} ($x_{t}>x_{U,t}$),
so that positive predictions ($f_{h}(\mathbf{x}_{t})>-\bar{p}$ for all $h=1,\ldots,H$) still yield
positive prices ($x_t>-\bar{p}$).

\begin{figure}[t]
\centerline{\includegraphics[scale=0.9]{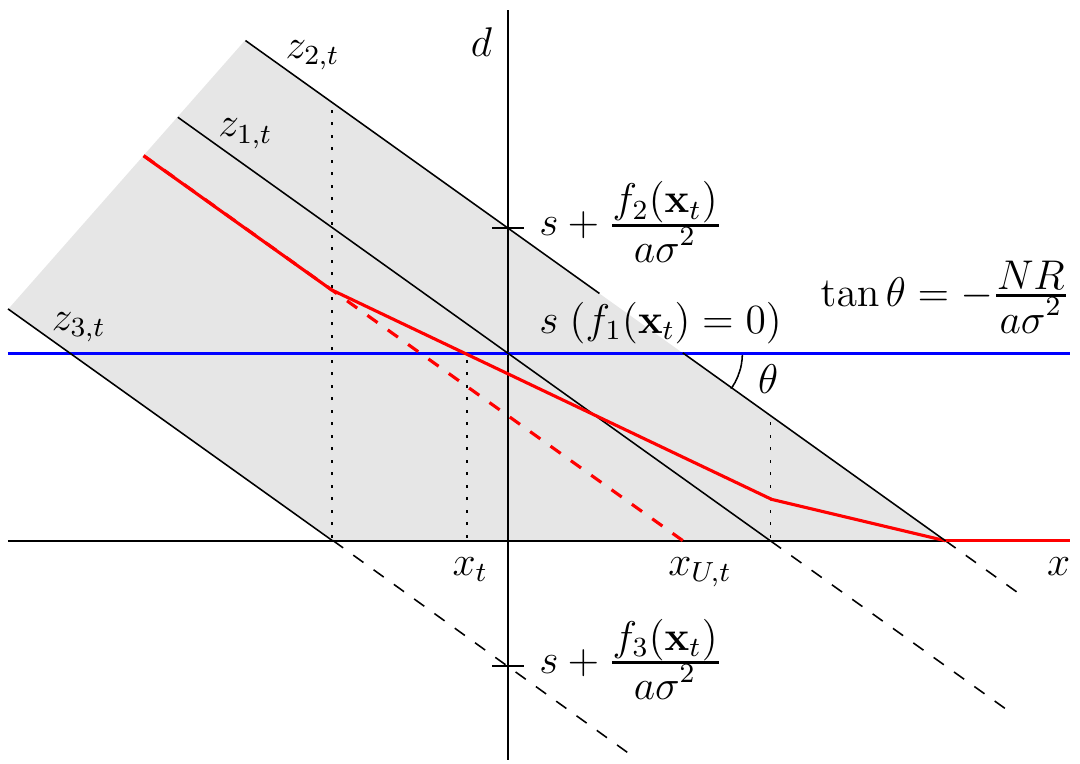}}
\caption{Per-capita demand ($d$, red) and supply ($s$, blue) curves as functions
of the price deviation $x$ to be realized in period $t$.
A case with $H=3$ types of traders is sketched, where $z_{1,t}$ is the demand
of "fundamentalists" (see Sect.~\ref{ssec:pre}), while $z_{2,t}$ and $z_{3,t}$ are the demands of "non-fundamentalists" ($f_{2}(\mathbf{x}_t)>0$, $f_{3}(\mathbf{x}_t)<0$, and
negative (dashed) demands are obtained with \eqref{eq:zx}).
The resulting per capita demand curve is piecewise-linear, continuous, and decreasing.
Depending on the traders' fractions ($n_{1,t}$, $n_{2,t}$, $n_{3,t}$), it can take on different
configurations in the shaded area
(the most negative of which corresponds to $z_{3,t}$ when $n_{3,t}=1$;
the case shown corresponds to $n_{1,t}=n_{2,t}=n_{3,t}=1/3$).
At the unconstrained price $x_{U,t}$ traders of type $3$ are in a short position in period $t$.}
\label{fig:mc}
\end{figure}

When solving Eq.~\eqref{eq:mc} for $x_t$ with the constrained demands \eqref{eq:zxc}, $2^H-1$ cases
must be further distinguished, depending on which of the optimal demands in \eqref{eq:zx} are forced to zero by \eqref{eq:zxc}
(obviously not all demands can vanish).
The uniqueness of forward dynamics guarantees that only one of the cases clears the market.

For simplicity, hereafter we will only consider the case with two types of traders ($H=2$), so
one of the following three cases is realized at each period:
\begin{itemize}
\item[$0$:] no trader is prohibited from going short (equivalently, both types of traders hold nonnegative amounts of shares in period $t$), i.e.,\\
\eqref{eq:zx} implies $z_{1,t}\ge 0$ and $z_{2,t}\ge 0$, with
$x_t=\Frac{1}{R}\left(n_{1,t}\,f_{1}(\mathbf{x}_{t})+n_{2,t}\,f_{2}(\mathbf{x}_{t})\right)$,
\item[$1$:] traders of type $1$ are prohibited from going short (only traders of type $2$ hold shares in period $t$), i.e.,\\
\eqref{eq:zx} implies $z_{1,t}<0$ and $z_{2,t}>0$, with
$x_{t}=\Frac{1}{R}\left(f_{2}(\mathbf{x}_{t})-a\hspace{0.2mm}\sigma^{2}s\Frac{n_{1,t}}{n_{2,t}}\right)$,
\item[$2$:] traders of type $2$ are prohibited from going short (only traders of type $1$ hold shares in period $t$), i.e.,\\
\eqref{eq:zx} implies $z_{1,t}>0$ and $z_{2,t}<0$, with
$x_{t}=\Frac{1}{R}\left(f_{1}(\mathbf{x}_{t})-a\hspace{0.2mm}\sigma^{2}s\Frac{n_{2,t}}{n_{1,t}}\right)$.
\end{itemize}

Then, $m_{t+1}$ must be computed (see \eqref{eq:m}) and, again, there are three cases, depending
on the signs of the optimal demands at period $(t-1)$ (see Eq.~\eqref{eq:cm}).
In order to simplify the model formulation, we prefer to enlarge the system's state, by including the traders' demands $z_{h,t-1}$ realized in period $(t-1)$, $h=1,2$, in lieu of the farthest price deviation $x_{t-(L+1)}$.
The state variables therefore are
\begin{subequations}
\begin{linenomath}
\begin{equation}
\label{eq:state}
\ensuremath\makebox[50mm][r]{$(x_{t-1}, x_{t-2},\dots, x_{t-L},\ z_{1,t-1},\ z_{2,t-1},\ m_t)$}\quad\text{if}\quad L\ge 2,
\end{equation}
\begin{equation}
\label{eq:stateL1}
\ensuremath\makebox[50mm][r]{$(x_{t-1}, x_{t-2},\ z_{1,t-1},\ z_{2,t-1},\ m_t)$}\quad\text{if}\quad L=1,
\end{equation}
\end{linenomath}
\end{subequations}
as we need $x_{t-2}$ to apply the uptick rule,
and we can update the traders' fractions by simply replacing \eqref{ARED_model_no_uptick_2m} with  
\begin{linenomath}
\begin{equation}
\label{eq:mz}
m_{t+1} = \tanh\left(\Frac{\beta}{2}\left((x_{t}-R\hspace{0.2mm}x_{t-1}+a\hspace{0.2mm}\sigma^{2}s)
(z_{1,t-1}-z_{2,t-1})-(C_1-C_2)\right)\right).
\end{equation}
\end{linenomath}

The uptick rule makes the ARED model piecewise smooth, namely the space of the state variables
is partitioned into three regions associated with different equations for updating the system's state (see \citep{Bernardo08} and references therein).
By defining the regions
\begin{subequations}
\label{eq:reg}
\begin{linenomath}
\begin{equation}
\label{eq:regZ}
\begin{array}{lrrr}
\multicolumn{1}{c}{U}: & x_{t-1}>x_{t-2},\\
Z_{0}: & x_{t-1}\le x_{t-2}, & z_{1,t}\ge 0, & z_{2,t}\ge 0,\\
Z_{1}: & x_{t-1}\le x_{t-2}, & z_{1,t}<0, & z_{2,t}>0,\\
Z_{2}: & x_{t-1}\le x_{t-2}, & z_{1,t}>0, & z_{2,t}<0,

\end{array}
\end{equation}
\end{linenomath}
where
\begin{linenomath}
\begin{equation}
\label{eq:regz}
z_{1,t} = \Frac{1-m_{t}}{2}\,\Frac{f_{1}(\mathbf{x}_{t})-f_{2}(\mathbf{x}_{t})}{a\hspace{0.2mm}\sigma^2}+s\quad\text{and}\quad
z_{2,t} = \Frac{1+m_{t}}{2}\,\Frac{f_{2}(\mathbf{x}_{t})-f_{1}(\mathbf{x}_{t})}{a\hspace{0.2mm}\sigma^2}+s
\end{equation}
\end{linenomath}
\end{subequations}
are the optimal demands from \eqref{eq:xt}, we can write the forward dynamics as follows:
\begin{linenomath}
\begin{subequations}
\label{ARED_model_uptick}
\begin{eqnarray}
\label{ARED_model_uptick_x}
x_t & = & \left\{\begin{array}{ll}
\ensuremath\makebox[60mm][l]{$\Frac{1}{2R}\left((1+m_{t})f_1(\mathbf{x}_{t})+(1-m_{t})f_2(\mathbf{x}_{t})\right)$} &
\quad\text{if}\quad (\mathbf{x}_{t},m_t)\in U \cup Z_{0},\\[2mm]
\Frac{1}{R}\left(f_{2}(\mathbf{x}_{t})-a\hspace{0.2mm}\sigma^{2}s\hspace{0.2mm}\Frac{1+m_t}{1-m_t}\right) &
\quad\text{if}\quad (\mathbf{x}_{t},m_t)\in Z_{1},\\[2mm]
\Frac{1}{R}\left(f_{1}(\mathbf{x}_{t})-a\hspace{0.2mm}\sigma^{2}s\hspace{0.2mm}\Frac{1-m_t}{1+m_t}\right) &
\quad\text{if}\quad (\mathbf{x}_{t},m_t)\in Z_{2},\\[2mm]
\end{array}\right.\\
\label{ARED_model_uptick_z1}
z_{1,t} & = & \left\{\begin{array}{ll}
\ensuremath\makebox[60mm][l]{$\Frac{1-m_{t}}{2}\,\Frac{f_{1}(\mathbf{x}_{t})-f_{2}(\mathbf{x}_{t})}{a\hspace{0.2mm}\sigma^2}+s$} &
\quad\text{if}\quad (\mathbf{x}_{t},m_t)\in U \cup Z_{0},\\[2mm]
0 &
\quad\text{if}\quad (\mathbf{x}_{t},m_t)\in Z_{1},\\[2mm]
\Frac{2\hspace{0.2mm}s}{1+m_t} &
\quad\text{if}\quad (\mathbf{x}_{t},m_t)\in Z_{2},\\[2mm]
\end{array}\right.\\
\label{ARED_model_uptick_z2}
z_{2,t} & = & \left\{\begin{array}{ll}
\ensuremath\makebox[60mm][l]{$\Frac{1+m_{t}}{2}\,\Frac{f_{2}(\mathbf{x}_{t})-f_{1}(\mathbf{x}_{t})}{a\hspace{0.2mm}\sigma^2}+s$} &
\quad\text{if}\quad (\mathbf{x}_{t},m_t)\in U \cup Z_{0},\\[2mm]
\Frac{2\hspace{0.2mm}s}{1-m_t} &
\quad\text{if}\quad (\mathbf{x}_{t},m_t)\in Z_{1},\\[2mm]
0 &
\quad\text{if}\quad (\mathbf{x}_{t},m_t)\in Z_{2},\\[2mm]
\end{array}\right.\\
\label{ARED_model_uptick_m}
m_{t+1} & = & \tanh\left(\Frac{\beta}{2}\left((x_{t}-R\hspace{0.2mm}x_{t-1}+a\hspace{0.2mm}\sigma^{2}s)
(z_{1,t-1}-z_{2,t-1})-(C_1-C_2)\right)\right).
\end{eqnarray}
\end{subequations}
\end{linenomath}

The same model can be rewritten in compact notations as follows:


\begin{linenomath}
\begin{eqnarray}\label{ARED_model_uptick_CF}
\nonumber
\left(x_t,z_{1,t},z_{2,t},m_{t+1}\right) & = & \left\{\begin{array}{ll}
\ensuremath\makebox[10mm][l]{$G_{1}\left(\mathbf{x}_{t},z_{1,t-1},z_{2,t-1},m_{t}\right)$} &
\quad\text{if}\quad (\mathbf{x}_{t},m_t)\in U \cup Z_{0},\\[2mm]
G_{2}\left(\mathbf{x}_{t},z_{1,t-1},z_{2,t-1},m_{t}\right) &
\quad\text{if}\quad (\mathbf{x}_{t},m_t)\in Z_{1},\\[2mm]
G_{3}\left(\mathbf{x}_{t},z_{1,t-1},z_{2,t-1},m_{t}\right) &
\quad\text{if}\quad (\mathbf{x}_{t},m_t)\in Z_{2},\\[2mm]
\end{array}\right.\\
\end{eqnarray}
\end{linenomath}

where $G_1$, $G_2$ and $G_3$ are three systems that define the asset pricing model with uptick rule.


\begin{figure}[t]
\centerline{\includegraphics[scale=0.7]{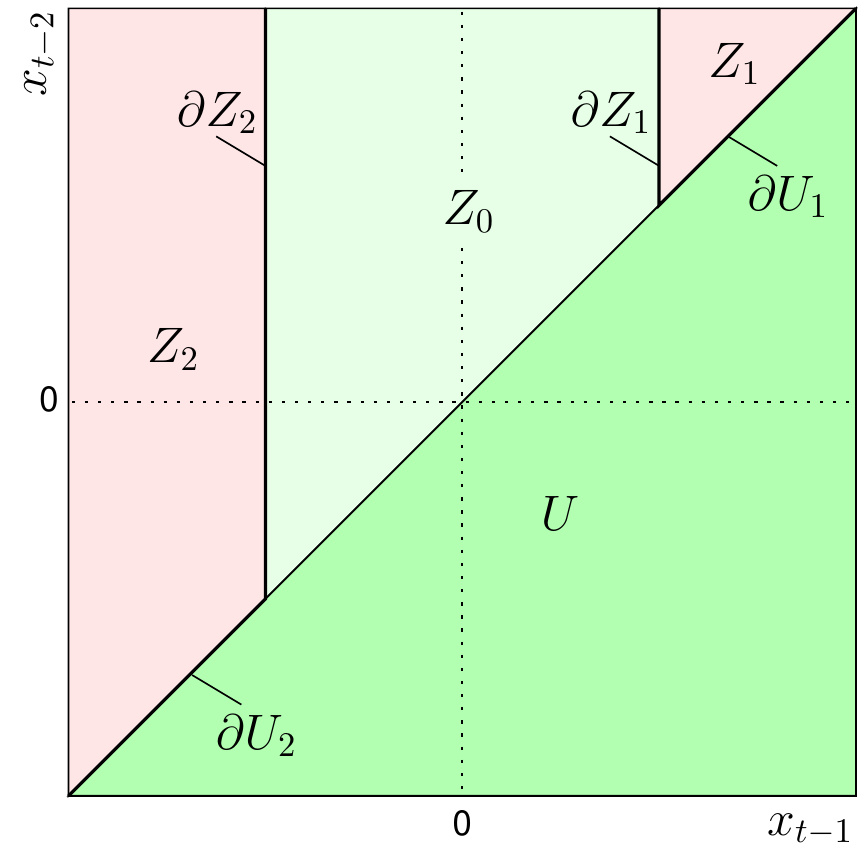}}
\caption{Partition of the state space into the three regions $U \cup Z_{0}$ (green) and $Z_h$, $h=1,2$ (pink):
the projection on the $(x_{t-1},x_{t-2})$ space in the case where traders use the fundamental and chartist predictors (see Section~\ref{ssec:pre}).}
\label{fig:ns}
\end{figure}

Region $U \cup Z_{0}$ is separated from region $Z_h$, $h=1,2$, by the two boundaries
\begin{linenomath}
\begin{equation}
\label{eq:bor}
\partial U_h:\; x_{t-1}=x_{t-2},\; z_{h,t}\le 0\quad\text{and}\quad
\partial Z_h:\; x_{t-1}\le x_{t-2},\; z_{h,t}=0
\end{equation}
\end{linenomath}
(see Figure~(\ref{fig:ns}), where a projection on the $(x_{t-1},x_{t-2})$ space is shown).
Across boundary $\partial U_k$ the system is discontinuous, i.e., the corresponding expressions on the right-hand sides of
(\ref{ARED_model_uptick_x}--c) assume different values on $\partial U_k$.
In contrast, the system is continuous (but not differentiable) at the boundaries $\partial Z_h$.

Similarly to the unconstrained model \eqref{ARED_model_no_uptick}, the initial condition of model \eqref{ARED_model_uptick} is set by the opening price deviation $x_0$ and the traders' initial composition $m_1$. However, Eq.~\eqref{ARED_model_uptick_m} also requires the traders' demand $z_{1,0}$ and $z_{2,0}$ which can be conventionally set at $s$.

Finally, let's discuss the fixed points of model \eqref{ARED_model_uptick} (or equivalently \eqref{ARED_model_uptick_CF}), which necessarily lie on the boundary of region $U$ and are still denoted with the pair $(\bar{x},\bar{m})$
(the equilibrium demands, which also characterize the equilibria of model \eqref{ARED_model_uptick}, can be obtained from Eqs.~(\ref{ARED_model_uptick_z1},c)).
Each of the three systems defining model \eqref{ARED_model_uptick}, system $G_{1}$ (equivalent to model \eqref{ARED_model_no_uptick} adding the demands as state variables) which defines the dynamics of the model in region $U \cup Z_{0}$ and the two systems, $G_{1}$ and $G_{2}$, which define the dynamics of the model in regions $Z_1$ and $Z_2$, respectively, have their own fixed points, which we call either {\it admissible} or {\it virtual} according to the region of the state space in which they are.
In this paper, a generic fixed point or equilibrium of system $G_{1}$ is called admissible if it lies in region $U \cup Z_{0}$ and virtual otherwise, a generic fixed point of system $G_{2}$ is called admissible if it lies in region $Z_{1}$ and virtual otherwise, and a generic fixed point of system $G_{3}$ is called admissible if it lies in region $Z_{2}$ and virtual otherwise.
Virtual fixed points are not equilibria of model \eqref{ARED_model_uptick}, but tracking their position is useful in the analysis.


The fundamental equilibrium is always an admissible fixed point of system $G_{1}$, also called the unconstrained system because equivalent to model \eqref{ARED_model_no_uptick} adding the demands as state variables. Indeed, it lies on the boundary between regions $U$ and $Z_{0}$, with positive demands (equal to $s$), i.e., it is always an interior point of region $U \cup Z_{0}$.
Its local stability is therefore ruled by Lemma~\ref{lm:fund}
(the storage of the previous demands in lieu of the farthest past deviation brings the number of zero eigenvalues to $2H-1$, $2H$ if $L=1$, see \eqref{eq:stateL1}),
while the existence of (admissible or virtual) non-fundamental equilibria of the unconstrained system $G_{1}$ is ruled by Lemma~\ref{lm:neq}.

The fixed points of the other two systems, $G_2$ and $G_3$, are of little interest.
They lie on the boundaries $\partial U_2$ and $\partial U_1$, respectively, across which model \eqref{ARED_model_uptick} is discontinuous.
Hence, there are arbitrarily small perturbations from the fixed point entering region $U$, for which the unconstrained system $G_1$ will map the system's state far from the fixed point.
The fixed points of the two systems $G_2$ and $G_{3}$ are therefore (highly) unstable and will not be considered in the analysis.

\subsection{Classical price predictors}
\label{ssec:pre}
In this Section we briefly introduce the price predictors used in this paper (see, e.g., \citep{Elder1993}, \citep{ChiarellaHe2002}, and \citep{ChiarellaHe2003}, for a more complete survey of the most classical types of price predictors used in the literature).
The first one, $f_1(\mathbf{x}_{t})$, called fundamental predictor, will be paired with each of the others, the non-fundamental predictors, in the analysis of Sections~\ref{sec:ana} and~\ref{sec:num}.
For this reason, all non-fundamental predictors will be denoted by $f_2(\mathbf{x}_{t})$.

\subsubsection*{Fundamental predictor}
Fundamental traders, or {\it fundamentalists}, believe that prices return to their fundamental value.
The simplest fundamental prediction is the fundamental price for period $t+1$, irrespectively of the recent trend:
\begin{linenomath}
\begin{equation}
\label{eq:pr1}
E_1[p_{t+1}]=\bar{p},\quad f_1(\mathbf{x}_{t})=0.
\end{equation}
\end{linenomath}
More generally, fundamentalists believe that prices will revert to the fundamental value by a factor $v$ at each period:
\begin{linenomath}
\begin{equation}
\label{eq:pr1v}
E_1[p_{t+1}]=\bar{p}+v\hspace{0.2mm}(p_{t-1}-\bar{p}),\quad f_1(\mathbf{x}_{t})=v\hspace{0.2mm}x_{t-1},\quad 0\le v<1,
\end{equation}
\end{linenomath}
the smaller is $v$, the highest is the expected speed of convergence to the fundamental price.


As in \citep{BrockHommes1998}, we assume that "training" costs must be borne to obtain enough "understanding" of how markets work in order to believe in the fundamental price, so fundamentalists incur into a cost $C_{1}>0$ at each prediction.

\subsubsection*{Chartist predictor}
The second type of simple trader that we consider is called {\it chartist} or {\it trend chaser}. This type of trader believes that any mispricing will continue, i.e. the chartist predictor is formally equivalent to predictor \eqref{eq:pr1v}:
\begin{linenomath}
\begin{equation}
\label{eq:pr1g}
E_2[p_{t+1}]=\bar{p}+g\hspace{0.2mm}(p_{t-1}-\bar{p}),\quad f_2(\mathbf{x}_{t})=g\hspace{0.2mm}x_{t-1},\quad g>1,
\end{equation}
\end{linenomath}
but amplifies, instead of damping, nonzero price deviations from the fundamental.



The chartist prediction is not costly.



\subsubsection*{Rate of change (ROC) predictor}

The third type of simple trader that we consider is called "{\it nonlinear technical analyst}" or "{\it ROC trader}".

The ROC ({\it"Price Rate Of Change"}) is a nonlinear prediction which applies the price rate of change averaged over the last $L-1$ periods,
\begin{subequations}
\begin{linenomath}
\begin{equation}
\label{eq:roc}
\text{ROC} = \left(\Frac{p_{t-1}}{p_{t-L}}\right)^{\hspace{-1.0mm}\frac{1}{L-1}}=\left(\Frac{\bar{p}+x_{t-1}}{\bar{p}+x_{t-L}}\right)^{\hspace{-1.0mm}\frac{1}{L-1}},\quad L\ge 2,
\end{equation}
\end{linenomath}
to $p_{t-1}$ twice to extrapolate $p_{t+1}$:
\begin{linenomath}
\begin{equation}
\label{eq:rocp}
E_2[p_{t+1}] = p_{t-1}\hspace{0.2mm}\text{ROC}^{2},\quad
f_2(\mathbf{x}_{t}) = (\bar{p}+x_{t-1})\hspace{0.2mm}\text{ROC}^{2}-\bar{p}.
\end{equation}
\end{linenomath}
The ROC predictor is typically "smoothed'' to avoid extreme rates of change (rates that are either too high or too close to zero, see Smoothed-ROC or S-ROC predictors in \citep{Elder1993}).
This interprets the traders' rationality that makes them diffident with extreme rates.
We adopt in particular the confidence mechanism introduced in \citep{DercoleCecchetto2010},
where the ROC is combined with the last available price.
Precisely, the price rate of change to be applied is a convex combination of the actual ROC \eqref{eq:roc} and the unitary rate (corresponding to the last available price),
with the ROC weight $\alpha_{\text{ROC}}$ that vanishes when the ROC attains extreme values (zero and infinity):
\begin{linenomath}
\begin{eqnarray}
\label{eq:rocpa}
E_2[p_{t+1}] & = & p_{t-1}\left(\alpha_{\text{ROC}}\hspace{0.2mm}\text{ROC}+(1-\alpha_{\text{ROC}})\right)^{2},\nonumber\\
f_2(\mathbf{x}_{t}) & = & (\bar{p}+x_{t-1})\left(\alpha_{\text{ROC}}\hspace{0.2mm}\text{ROC}+(1-\alpha_{\text{ROC}})\right)^{2}-\bar{p}.
\end{eqnarray}
\end{linenomath}
The function
\begin{linenomath}
\begin{equation}
\label{eq:roca}
\alpha_{\text{ROC}}=\Frac{2}{\text{ROC}^{\hspace{0.5mm}\alpha}+\text{ROC}^{\hspace{0.5mm}-\alpha}}
\end{equation}
\end{linenomath}
\end{subequations}
has been used in the analysis, where the parameter $1/\alpha$ measures how confident traders are with extreme rates.

The ROC predictor and the S-ROC predictor are not costly.  


\section{The effect of the uptick rule on shares price fluctuations: Analytical results}
\label{sec:ana}
In this Section we report the stability analysis of fundamental and non-fundamental equilibria
of models \eqref{ARED_model_no_uptick_2} and \eqref{ARED_model_uptick} for two pairs of traders' types.
As traditionally done in the literature, type $1$ is always the fundamental type (price predictor \eqref{eq:pr1v}),
while type two is either the chartist in Sect.~\ref{ssec:cha_a} (predictor \eqref{eq:pr1g})
or the nonlinear technical analyst (ROC trader) in Sect.~\ref{ssec:roc_a} (predictor (\ref{eq:roc},c,d)). 


\subsection{Fundamentalists vs chartists}
\label{ssec:cha_a}
Consider models \eqref{ARED_model_no_uptick_2} and \eqref{ARED_model_uptick} with predictors \eqref{eq:pr1v} and \eqref{eq:pr1g}.
Model (\ref{ARED_model_no_uptick_2},~\ref{eq:pr1v},~\ref{eq:pr1g}) is the classical ARED model, proposed and fully analyzed in \citep{BrockHommes1998} for the case of zero supply of outside shares, i.e., $s=0$, where short selling is intrinsically practiced at each trading period.
The case with positive supply is analyzed in \citep{AnufrievTuinstra2009}, where the effects of a negative bound on the traders' positions are also investigated.

Without any constraint on short selling, 
the existence and stability of the fixed points of model (\ref{ARED_model_no_uptick_2},~\ref{eq:pr1v},~\ref{eq:pr1g}) are defined in the following lemma:

\begin{lemma}
\label{lm:cha}
The following statements hold true for the dynamical system \eqref{ARED_model_no_uptick_2} with predictors \eqref{eq:pr1v} and \eqref{eq:pr1g}:
\begin{enumerate}
\item
For $1<g<R$ the fundamental equilibrium $(0,\bar{m}^{(0)})$ (see Lemma~\ref{lm:fund}) is the only fixed point and is globally stable.
\item
For $R<g<2R-v$ there are the following possibilities:
\begin{enumerate}
\item[\rm (a)]
For $0\hspace{-0.2mm}\le\hspace{-0.2mm}\beta\hspace{-0.2mm}<\hspace{-0.2mm}\beta_{\mathrm{LP}}=\hspace{-0.2mm}\Frac{1}{C}\log\hspace{-0.3mm}\left(\Frac{R\hspace{-0.2mm}-\hspace{-0.2mm}v}{g\hspace{-0.2mm}-\hspace{-0.3mm}R}\right)\hspace{-0.7mm}\left(1\hspace{-0.3mm}+\hspace{-0.2mm}\Frac{a\hspace{0.2mm}\sigma^2s^2}{4C}\Frac{g\hspace{-0.2mm}-\hspace{-0.2mm}v}{R\hspace{-0.2mm}-\hspace{-0.2mm}1}\right)^{\hspace{-1mm}-1}\hspace{-1.0mm}>\hspace{-0.3mm}0$
the fundamental equilibrium is the only fixed point and is stable.
\item[\rm (LP)]
At $\beta\hspace{-0.1mm}=\hspace{-0.1mm}\beta_{\mathrm{LP}}$ two equilibria appear (as $\beta$ increases) at
\begin{linenomath}
$$
\bar{x}_{\mathrm{LP}}\hspace{-0.3mm}=\hspace{-0.2mm}\Frac{a\hspace{0.2mm}\sigma^2s}{2(R\hspace{-0.3mm}-\hspace{-0.4mm}1)}\hspace{-0.3mm}>\hspace{-0.3mm}0,\quad
\bar{m}=1-2\hspace{0.2mm}\Frac{R-v}{g-v}, 
$$
\end{linenomath}
through a saddle-node bifurcation (limit point, \emph{LP}).
\item[\rm (b)]
For $\beta_{\mathrm{LP}}\hspace{-0.2mm}<\hspace{-0.2mm}\beta\hspace{-0.2mm}<\hspace{-0.2mm}\min\left\{\beta_{\mathrm{TR}}=\hspace{-0.2mm}\Frac{1}{C}\log\hspace{-0.3mm}\left(\Frac{R\hspace{-0.2mm}-\hspace{-0.2mm}v}{g\hspace{-0.2mm}-\hspace{-0.3mm}R}\right),\beta_{\mathrm{NS}}^{(+)}\right\}>\beta_{\mathrm{LP}}$
the fundamental equilibrium is locally (asymptotically) stable and coexists with the two non-fundamental equilibria $(\bar{x}^{(\pm)},\bar{m})$, with
\begin{linenomath}
$$
\bar{x}^{(\pm)}=\bar{x}_{\mathrm{LP}}\pm\sqrt{\left(\bar{x}_{\mathrm{LP}}^2+\Frac{a\hspace{0.2mm}\sigma^2C}{(R\hspace{-0.3mm}-\hspace{-0.4mm}1)(g\hspace{-0.2mm}-\hspace{-0.2mm}v)}\right)\hspace{-0.5mm}\left(\hspace{-0.5mm}1\hspace{-0.4mm}-\hspace{-0.3mm}\Frac{\beta_{\mathrm{LP}}}{\beta}\hspace{-0.2mm}\right)\hspace{-0.3mm}}\hspace{0.3mm}.
$$
\end{linenomath}
Equilibrium $(\bar{x}^{(+)},\bar{m})$ is locally (asymptotically) stable, whereas $(\bar{x}^{(-)},\bar{m})$ is a saddle with 2-dimensional stable manifold separating the basins of attraction of the two stable equilibria.
\item[\rm (TR)]
At $\beta\hspace{-0.1mm}=\hspace{-0.1mm}\beta_{\mathrm{TR}}$, $(\bar{x}^{(-)},\bar{m})$ collides and exchanges stability with the fundamental equilibrium (transcritical bifurcation, \emph{TR}). The fundamental equilibrium is always at least locally (asymptotically) stable for $\beta\hspace{-0.1mm}<\hspace{-0.1mm}\beta_{\mathrm{TR}}$ and it is always unstable for $\beta\hspace{-0.1mm}>\hspace{-0.1mm}\beta_{\mathrm{TR}}$.
\item[\rm (NS$^{(+)}$)]
At $\beta\hspace{-0.1mm}=\hspace{-0.1mm}\beta_{\mathrm{NS}}^{(+)}$ the equilibrium $(\bar{x}^{(+)},\bar{m})$ undergoes a Neimark-Sacker (NS) bifurcation.
No explicit expression is available for $\beta_{\mathrm{NS}}^{(+)}$, but $\beta_{\mathrm{TR}}\lessgtr\beta_{\mathrm{NS}}^{(+)}$ if $\beta_{\mathrm{TR}}\lessgtr\Frac{1}{a\hspace{0.2mm}\sigma^2s^2}\Frac{(R\hspace{-0.2mm}-\hspace{-0.2mm}1)^2}{(g\hspace{-0.2mm}-\hspace{-0.2mm}R)(R\hspace{-0.2mm}-\hspace{-0.2mm}v)}$.
\item[\rm (c)]
For $\beta_{\mathrm{TR}}\hspace{-0.2mm}<\hspace{-0.2mm}\beta\hspace{-0.2mm}<\hspace{-0.2mm}\beta_{\mathrm{NS}}^{(-)}>\beta_{\mathrm{TR}}$
the fundamental equilibrium is a saddle, with 2-dimensional stable manifold separating the positive from the negative dynamics,
and the equilibrium $(\bar{x}^{(-)},\bar{m})$, with $\bar{x}^{(-)}\hspace{-0.4mm}<0$, is stable.
\item[\rm (NS$^{(-)}$)]
At $\beta\hspace{-0.1mm}=\hspace{-0.1mm}\beta_{\mathrm{NS}}^{(-)}$ the equilibrium $(\bar{x}^{(-)},\bar{m})$ undergoes a Neimark-Sacker bifurcation.
\end{enumerate}
\item
For $g>2R-v$ there are the following possibilities:
\begin{enumerate}
\item[\rm (a)]
For $0\hspace{-0.2mm}\le\hspace{-0.2mm}\beta\hspace{-0.2mm}<\hspace{-0.2mm}\beta_{\mathrm{NS}}^{(\pm)}$
the fundamental equilibrium is unstable and the equilibria $(\bar{x}^{(\pm)},\bar{m})$ ($\bar{x}^{(+)}\hspace{-0.4mm}>0$ and $\bar{x}^{(-)}\hspace{-0.4mm}<0$) are stable.
\item[\rm (NS)]
At $\beta\hspace{-0.1mm}=\hspace{-0.1mm}\beta_{\mathrm{NS}}^{(\pm)}$ the equilibria $(\bar{x}^{(\pm)},\bar{m})$ undergo a Neimark-Sacker bifurcation.
\end{enumerate}
\item
For $g>R^2$ the dynamics can be unbounded for sufficiently large $\beta$. 
\end{enumerate}
\end{lemma}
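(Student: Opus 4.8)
The plan is to turn \eqref{ARED_model_no_uptick_2} with $f_1(\mathbf{x}_t)=v\,x_{t-1}$, $f_2(\mathbf{x}_t)=g\,x_{t-1}$ (so $C_1=C>0$, $C_2=0$, $L=1$) into the explicit map on the state $(x_{t-1},x_{t-2},m_t)$ given by $x_t=q(m_t)\,x_{t-1}$ with $q(m)=\frac{1}{2R}\big(g+v+m(v-g)\big)$, together with the $\tanh$-update for $m_{t+1}$, and first classify the fixed points. Setting $x_t=x_{t-1}=x_{t-2}=\bar x$, $m_t=m_{t+1}=\bar m$ gives $\bar x=0$ (the fundamental, always present, with $\bar m^{(0)}=\tanh(-\beta C/2)$ as in Lemma~\ref{lm:fund}) or, for $\bar x\ne 0$, the pricing identity $2R=g+v+\bar m(v-g)$ that pins $\bar m=1-2\frac{R-v}{g-v}$ --- admissible in $(-1,1)$ iff $g>R$ --- and then, using $\mathrm{arctanh}(\bar m)=\tfrac12\log\frac{g-R}{R-v}=-\tfrac12\beta_{\mathrm{TR}}C$ with $\beta_{\mathrm{TR}}=\tfrac1C\log\frac{R-v}{g-R}$, the quadratic $\frac{R-1}{a\sigma^2}\bar x^2-s\bar x-\frac{C(1-\beta_{\mathrm{TR}}/\beta)}{g-v}=0$. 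This parabola opens upward with vertex $\bar x_{\mathrm{LP}}=\frac{a\sigma^2 s}{2(R-1)}$; its discriminant vanishes exactly at $\beta=\beta_{\mathrm{LP}}$ (yielding the displayed formula) and its constant term changes sign at $\beta_{\mathrm{TR}}$. Counting roots and tracking signs then gives the case split: $1<g<R$, no non-fundamental equilibrium; $R<g<2R-v$ ($\bar m<0$), none for $\beta<\beta_{\mathrm{LP}}$, a double root $\bar x_{\mathrm{LP}}$ at $\beta_{\mathrm{LP}}$, two positive roots $\bar x^{(\pm)}$ for $\beta_{\mathrm{LP}}<\beta<\beta_{\mathrm{TR}}$, with $\bar x^{(-)}\to0$ at $\beta_{\mathrm{TR}}$ and negative thereafter, the explicit $\bar x^{(\pm)}$ being an algebraic rearrangement of the above; $g>2R-v$ ($\bar m>0$), exactly two roots $\bar x^{(+)}>0>\bar x^{(-)}$ for every $\beta>0$.

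Next I would linearize. At the fundamental every Jacobian entry carrying a factor $\bar x=0$ drops out, so the Jacobian is near-triangular with characteristic polynomial $\lambda^2\big(\lambda-q(\bar m^{(0)})\big)$, consistent with Lemma~\ref{lm:fund}; since $q(\bar m^{(0)})=\frac{v e^{-\beta C}+g}{1+e^{-\beta C}}>0$, local stability is $q(\bar m^{(0)})<1$, i.e. $g-R<(R-v)e^{-\beta C}$, which holds for all $\beta$ when $g<R$, never when $g>2R-v$, and precisely for $\beta<\beta_{\mathrm{TR}}$ when $R<g<2R-v$. For part 1 I would upgrade this to global stability: when $g<R$ one has $q(m)\in(v/R,\,g/R)\subset(0,1)$ uniformly in $m\in(-1,1)$, hence $|x_t|<(g/R)|x_{t-1}|\to0$ geometrically, after which the bracket in the $\tanh$-update tends to $-C$ and $m_t\to\bar m^{(0)}$; together with the absence of other fixed points (or Lemma~\ref{lm:neq}) this gives global convergence to the fundamental equilibrium.

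For the non-fundamental equilibria the Jacobian is a genuine $3\times3$ matrix; writing its entries in terms of $\bar x$, $\bar m$, $\beta$ and $\rho=1-\bar m^2=\frac{4(R-v)(g-R)}{(g-v)^2}$ gives a characteristic polynomial $\lambda^3-(1+A_3)\lambda^2+(A_3-PA_1)\lambda-PA_2$ with explicit $A_1,A_2,A_3,P$. The fold at $\beta_{\mathrm{LP}}$ I would confirm by checking that $\lambda=1$ is a root there with the usual non-degeneracy/transversality; the transcritical at $\beta_{\mathrm{TR}}$ by the branch crossing $\bar x^{(-)}\to0$ with a shared $+1$-eigenvalue and exchange of stability; and a Neimark--Sacker of $(\bar x^{(\pm)},\bar m)$ by requiring a conjugate pair of the cubic on the unit circle, i.e. a resultant-type identity among the $A_i$ with no closed form in general. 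The comparison in (NS$^{(+)}$) comes out cleanly: at $\beta=\beta_{\mathrm{TR}}$ one has $\bar x^{(+)}=\frac{a\sigma^2 s}{R-1}$, so $(1-R)\bar x^{(+)}+a\sigma^2 s=0$, hence $A_2=0$ and the cubic factors as $\lambda\big(\lambda^2-(1+A_3)\lambda+(A_3-PA_1)\big)$ with $A_3=\frac1R(A_3-PA_1)$ and $A_3-PA_1=\beta_{\mathrm{TR}}\frac{(R-v)(g-R)a\sigma^2 s^2}{(R-1)^2}$; the nontrivial pair leaves the unit circle exactly when this product passes $1$, so $(\bar x^{(+)},\bar m)$ is stable at $\beta_{\mathrm{TR}}$ iff $\beta_{\mathrm{TR}}<\frac{(R-1)^2}{a\sigma^2 s^2(g-R)(R-v)}$, and since $\bar x^{(+)}$ is born a sink at $\beta_{\mathrm{LP}}^{+}$ and can only lose stability through that Neimark--Sacker, this is equivalent to $\beta_{\mathrm{TR}}\lessgtr\beta_{\mathrm{NS}}^{(+)}$. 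The remaining items in parts 2 and 3 --- stability of $\bar x^{(+)}$ on $(\beta_{\mathrm{LP}},\beta_{\mathrm{TR}})$ and of $\bar x^{(-)}$ on $(\beta_{\mathrm{TR}},\beta_{\mathrm{NS}}^{(-)})$, the saddle type of $\bar x^{(-)}$ on $(\beta_{\mathrm{LP}},\beta_{\mathrm{TR}})$ with a two-dimensional stable manifold acting as a basin separatrix, and the existence of $\beta_{\mathrm{NS}}^{(\pm)}$ when $g>2R-v$ --- follow from continuation of the roots of the cubic and the standard sink/saddle dichotomy at the fold.

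Finally, for part 4 ($g>R^2$) I would argue that for large $\beta$ the choice map is essentially a hard best response: on states with $x_{t-1}$ large and positive and $x_{t-1}>x_{t-2}$, the chartist (net demand $\approx s>0$) beats the short fundamentalist whenever the realized excess return $R_t\approx\frac{g-R^2}{R}x_{t-1}+a\sigma^2 s$ is positive, which for $g>R^2$ holds once $x_{t-1}$ is large; then $m_{t+1}\to-1$ and $x_{t+1}\approx\frac gR x_t$, a self-reinforcing loop. Making this rigorous means exhibiting a forward-invariant ``funnel'' $\{x_{t-1}>M,\ x_{t-1}>x_{t-2},\ m_t<-1+\eta\}$ on which $x$ grows geometrically, for $\beta$ and $M$ large enough. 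I expect the main obstacle to be exactly this package of non-local arguments --- the invariant funnel for unboundedness and, above all, the basin-separation claims via stable manifolds of saddles in parts 2(b), 2(c) and 3 --- since these go beyond the lengthy but routine local eigenvalue bookkeeping that settles everything else.
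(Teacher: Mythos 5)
Your proposal is correct and follows essentially the same route as the paper's own proof: equilibria from fixing $\bar m=1-2\frac{R-v}{g-v}$ and solving the resulting quadratic in $\bar x$ (giving $\beta_{\mathrm{LP}}$, $\beta_{\mathrm{TR}}$ and $\bar x^{(\pm)}$), the single nontrivial eigenvalue $\frac{1}{2R}\bigl((1+\bar m^{(0)})v+(1-\bar m^{(0)})g\bigr)$ for the fundamental, the contraction argument for global stability when $g<R$, the $3\times3$ cubic at the non-fundamental equilibria with loss of stability via Neimark--Sacker, and chartist domination for unboundedness when $g>R^2$. The only differences are matters of detail: the paper closes the NS step by showing $\lambda=1$ occurs only at $\bar x=0,\bar x_{\mathrm{LP}}$, that $\lambda=-1$ would require an already-unstable equilibrium, and that the trace diverges with $\beta$, and it proves part 4 only in the limiting case $\beta\to\infty$, so the non-local work you flag as the main obstacle is actually lighter than you fear, while your explicit factorization of the cubic at $\beta_{\mathrm{TR}}$ to get the $\beta_{\mathrm{TR}}\lessgtr\beta_{\mathrm{NS}}^{(+)}$ criterion spells out a step the paper leaves implicit.
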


Lemma \ref{lm:cha} generalizes Lemmas 2,~3, and~4 in \citep{BrockHommes1998} to the case $s>0$, $0<v<1$, and Proposition~3.1 in \citep{AnufrievTuinstra2009}.
In particular, for $s=0$, note that the saddle-node and transcritical bifurcations concomitantly occur (case $2$) at a so-called pitchfork bifurcation,
whereas the mechanism making the fundamental equilibrium unstable is different for $s>0$.
First, the two non-fundamental equilibria $(\bar{x}^{(\pm)},\bar{m})$ appear (as the traders' adaptability $\beta$ increases) through the saddle-node bifurcation,
and as $\beta$ increases further a transcritical bifurcation occurs in which the saddle $(\bar{x}^{(-)},\bar{m})$ exchanges stability with the fundamental equilibrium.
Thus, for $\beta_{\mathrm{LP}}<\beta<\beta_{\mathrm{TR}}$, the fundamental equilibrium is stable,
but coexists with an alternative stable fixed point of model (\ref{ARED_model_no_uptick_2},~\ref{eq:pr1v},~\ref{eq:pr1g}).




With the uptick-rule,
the existence and stability of the fixed points of model (\ref{ARED_model_uptick},~\ref{eq:pr1v},~\ref{eq:pr1g}) are complemented by the following lemma:

\begin{lemma}
\label{lm:cha_uptick}
The following statements hold true for the dynamical system \eqref{ARED_model_uptick} with predictors \eqref{eq:pr1v} and \eqref{eq:pr1g}:
\begin{enumerate}
\item
The local and global stability of the fundamental equilibrium is as in Lemma~\ref{lm:cha}.
As long as equilibria $(\bar{x}^{(\pm)},\bar{m})$ exist and are admissible, their local stability is as in Lemma~\ref{lm:cha}.
Equilibrium $(\bar{x}^{(+)},\bar{m})$ is admissible iff $\bar{x}^{(+)}\hspace{-0.4mm}\le\hspace{-0.1mm}\bar{x}_{\mathrm{BC}}^{(+)}\hspace{-0.2mm}=a\hspace{0.2mm}\sigma^2s/(R-v)$.
Equilibrium $(\bar{x}^{(-)},\bar{m})$ is admissible iff $\bar{x}_{\mathrm{BC}}^{(-)}\hspace{-0.2mm}=-a\hspace{0.2mm}\sigma^2s/(g\hspace{-0.2mm}-\hspace{-0.2mm}R)\le\bar{x}^{(-)}\hspace{-0.3mm}\le\bar{x}_{\mathrm{BC}}^{(+)}$.
\item
For $R<g<2R-v$ there are the following possibilities:
\begin{enumerate}
\item[\rm (a)]
If $R\hspace{-0.2mm}+\hspace{-0.2mm}v>2$, equilibria $(\bar{x}^{(\pm)},\bar{m})$ appear admissible at $\beta=\beta_{\mathrm{LP}}$ and $(\bar{x}^{(+)},\bar{m})$ becomes virtual (border-collision bifurcation) at $\beta=\beta_{\mathrm{BC}}^{(+)}$, with
\begin{linenomath}
$$
0<\beta_{\mathrm{BC}}^{(+)}=\Frac{1}{C}\log\hspace{-0.3mm}\left(\Frac{R\hspace{-0.2mm}-\hspace{-0.2mm}v}{g\hspace{-0.2mm}-\hspace{-0.3mm}R}\right)\hspace{-0.7mm}\left(1\hspace{-0.3mm}+\hspace{-0.2mm}\Frac{a\hspace{0.2mm}\sigma^2s^2}{C}\Frac{(g\hspace{-0.2mm}-\hspace{-0.2mm}v)(1\hspace{-0.2mm}-\hspace{-0.2mm}v)}{(R\hspace{-0.2mm}-\hspace{-0.2mm}v)^2}\right)^{\hspace{-1mm}-1}\hspace{-1.0mm}<\beta_{\mathrm{TR}}.
$$
\end{linenomath}
\item[\rm (b)]
If $R\hspace{-0.2mm}+\hspace{-0.2mm}v<2$, equilibria $(\bar{x}^{(\pm)},\bar{m})$ appear virtual at $\beta=\beta_{\mathrm{LP}}$ and $(\bar{x}^{(-)},\bar{m})$ becomes admissible at $\beta=\beta_{\mathrm{BC}}^{(+)}$.
\item[\rm (c)]
If $R\hspace{-0.2mm}+\hspace{-0.2mm}v=2$, equilibria $(\bar{x}^{(\pm)},\bar{m})$ appear on the border $\partial Z_1$ at $\beta=\beta_{\mathrm{LP}}=\beta_{\mathrm{BC}}^{(+)}$ and $(\bar{x}^{(+)},\bar{m})$ and $(\bar{x}^{(-)},\bar{m})$ are respectively virtual and admissible for larger $\beta$.
\item[\rm (d)]
If $s<s_{\mathrm{BC}}^{(-)}=\left(\Frac{C}{a\hspace{0.2mm}\sigma^2}\Frac{(g-R)^2}{(g-v)(g-1)}\right)^{\hspace{-0.7mm}1\hspace{-0.2mm}/2}$,
equilibrium $(\bar{x}^{(-)},\bar{m})$ becomes virtual at $\beta=\beta_{\mathrm{BC}}^{(-)}$, with
\begin{linenomath}
$$
\beta_{\mathrm{BC}}^{(-)}=\Frac{1}{C}\log\hspace{-0.3mm}\left(\Frac{R\hspace{-0.2mm}-\hspace{-0.2mm}v}{g\hspace{-0.2mm}-\hspace{-0.3mm}R}\right)\hspace{-0.7mm}\left(1\hspace{-0.3mm}-\hspace{-0.2mm}\Frac{a\hspace{0.2mm}\sigma^2s^2}{C}\Frac{(g\hspace{-0.2mm}-\hspace{-0.2mm}v)(g\hspace{-0.2mm}-\hspace{-0.2mm}1)}{(g\hspace{-0.2mm}-\hspace{-0.2mm}R)^2}\right)^{\hspace{-1mm}-1}\hspace{-1.0mm}>\beta_{\mathrm{TR}}.
$$
\end{linenomath}
\end{enumerate}
\item
For $g>2R-v$ there are the following possibilities:
\begin{enumerate}
\item
If $s<s_{\mathrm{BC}}^{(-)}$ equilibria $(\bar{x}^{(\pm)},\bar{m})$ are virtual for any $\beta\ge0$.
\item
If $s>s_{\mathrm{BC}}^{(-)}$ equilibrium $(\bar{x}^{(+)},\bar{m})$ is virtual for any $\beta\ge0$, whereas $(\bar{x}^{(-)},\bar{m})$ is admissible for $\beta>\beta_{\mathrm{BC}}^{(-)}>0$.
\end{enumerate}
\item
For $g>R^2$ the dynamics can be unbounded for sufficiently large $\beta$. 
\end{enumerate}
\end{lemma}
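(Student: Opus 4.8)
The plan is to lean on the structural fact, established in Section~\ref{ssec:ARED_uptick}, that model~\eqref{ARED_model_uptick} coincides with the (state-augmented) unconstrained system $G_1$ throughout the region $U\cup Z_0$, so that every equilibrium sitting in the interior of $U\cup Z_0$ is an equilibrium of model~\eqref{ARED_model_no_uptick_2} with \emph{the same} local stability type (up to the extra zero eigenvalues produced by storing $z_{1,t-1},z_{2,t-1}$ in place of $x_{t-(L+1)}$). The only genuinely new effects are then \emph{border collisions}: an equilibrium of $G_1$ that crosses one of the boundaries $\partial Z_1$ or $\partial Z_2$ leaves $U\cup Z_0$ and ceases to be an equilibrium of the full model, turning from admissible to virtual. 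So the bulk of the proof is (i)~locating the non-fundamental equilibria with respect to these boundaries and (ii)~solving for the parameter values at which the crossings occur.

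For statement~1, I would first note that the fundamental equilibrium has demands $\bar z_1=\bar z_2=s>0$ and lies on the face $x_{t-1}=x_{t-2}$, hence in the interior of $U\cup Z_0$; its eigenvalues are therefore exactly those of Lemma~\ref{lm:fund} together with the additional zero eigenvalues, giving the local statement. For the global statement (only relevant when $1<g<R$) I would check directly that the one-step price map $x_{t-1}\mapsto x_t$ contracts toward the origin in each region: in $U\cup Z_0$ the multiplier is $(g+v+m_t(g-v))/(2R)\in(v/R,\,g/R)\subset(0,1)$, while in $Z_1$ (entered only for $x_{t-1}$ large positive, by the sign condition on the optimal $z_{1,t}$ in~\eqref{eq:regz}) one has $0<x_t<(g/R)x_{t-1}$, and symmetrically in $Z_2$; since a trajectory cannot stay in $Z_1\cup Z_2$ forever — there prices would be monotone and bounded away from $0$, contradicting the contraction — it re-enters $U\cup Z_0$ infinitely often, and global convergence follows, no other fixed point being possible by Lemma~\ref{lm:neq} applied to $G_1$ and by the instability of the fixed points of $G_2,G_3$ argued in Section~\ref{ssec:ARED_uptick}.

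Next I would compute the equilibrium demands of the non-fundamental equilibrium $(\bar x^{(\pm)},\bar m)$. Plugging $\bar m=1-2(R-v)/(g-v)$ (from Lemma~\ref{lm:cha}) and $f_1=v\bar x$, $f_2=g\bar x$ into~\eqref{eq:regz} yields $\bar z_1=s-(R-v)\bar x/(a\sigma^2)$ and $\bar z_2=s+(g-R)\bar x/(a\sigma^2)$, so admissibility — i.e.\ the equilibrium lies in $\overline{U\cup Z_0}$ — is equivalent to $\bar z_1\ge0$ and $\bar z_2\ge0$, that is, $\bar x_{\mathrm{BC}}^{(-)}=-a\sigma^2 s/(g-R)\le\bar x\le a\sigma^2 s/(R-v)=\bar x_{\mathrm{BC}}^{(+)}$; for $(\bar x^{(+)},\bar m)$ the lower bound is automatic since $\bar x^{(+)}>0$. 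In the interior of $U\cup Z_0$ the local stability is that given in Lemma~\ref{lm:cha}, again because the dynamics there are those of $G_1$.

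Finally I would track $\bar x^{(\pm)}$ in $\beta$ through the explicit formula $\bar x^{(\pm)}=\bar x_{\mathrm{LP}}\pm\sqrt{A\,(1-\beta_{\mathrm{LP}}/\beta)}$ with $A=\bar x_{\mathrm{LP}}^2+a\sigma^2 C/((R-1)(g-v))$. For $R<g<2R-v$ one has $\beta_{\mathrm{LP}}>0$, so $\bar x^{(+)}$ is increasing and $\bar x^{(-)}$ decreasing, both emanating from $\bar x_{\mathrm{LP}}$ at $\beta=\beta_{\mathrm{LP}}$; since $\bar x_{\mathrm{BC}}^{(+)}-\bar x_{\mathrm{LP}}$ has the sign of $R+v-2$, the three positions of $\bar x_{\mathrm{LP}}$ relative to $\bar x_{\mathrm{BC}}^{(+)}$ give precisely the subcases (a),(b),(c), and in each one the crossing of $\partial Z_1$ is found by solving $\bar x^{(\pm)}(\beta)=\bar x_{\mathrm{BC}}^{(+)}$, which is linear in $1/\beta$ and — because only $(\bar x_{\mathrm{BC}}^{(+)}-\bar x_{\mathrm{LP}})^2$ enters — yields the single value $\beta_{\mathrm{BC}}^{(+)}$; a short simplification using $(R-v)^2-(R+v-2)^2=4(1-v)(R-1)$ puts it in the stated closed form, and the factor multiplying $\beta_{\mathrm{TR}}$ being $<1$ gives $\beta_{\mathrm{BC}}^{(+)}<\beta_{\mathrm{TR}}$. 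Likewise $\bar x^{(-)}$ reaches $\bar x_{\mathrm{BC}}^{(-)}$ at finite $\beta$ iff its limit $\bar x_{\mathrm{LP}}-\sqrt A$ lies below $\bar x_{\mathrm{BC}}^{(-)}$, which, using $(g+R-2)^2-(g-R)^2=4(R-1)(g-1)$, reduces to $s<s_{\mathrm{BC}}^{(-)}$ and produces $\beta_{\mathrm{BC}}^{(-)}>\beta_{\mathrm{TR}}$ — subcase (d). For $g>2R-v$ the signs of $\beta_{\mathrm{LP}},\beta_{\mathrm{TR}}$ become negative, the monotonicities flip, $\bar x^{(+)}$ stays above $\bar x_{\mathrm{LP}}+\sqrt A>\bar x_{\mathrm{BC}}^{(+)}$ (hence always virtual) and $\bar x^{(-)}$ increases from $-\infty$, now crossing $\bar x_{\mathrm{BC}}^{(-)}$ iff $s>s_{\mathrm{BC}}^{(-)}$ — the (a)/(b) dichotomy — while $g>R^2$ is inherited directly from Lemma~\ref{lm:cha}, since the escaping trajectories have rising prices and therefore live in $U$, where the two models agree. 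The main obstacles I anticipate are the global-stability argument of statement~1 (the region-by-region contraction estimate and the exclusion of trajectories trapped in $Z_1\cup Z_2$) and the careful bookkeeping of which of $\bar x^{(\pm)}$ actually hits $\partial Z_1$ in each subcase of $R<g<2R-v$; the derivation of the closed forms for $\beta_{\mathrm{BC}}^{(\pm)}$ and $s_{\mathrm{BC}}^{(-)}$ is the routine algebra indicated above.
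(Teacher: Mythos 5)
Your proposal follows essentially the same route as the paper's proof: the fundamental equilibrium and the pair $(\bar x^{(\pm)},\bar m)$ are treated as fixed points of the unconstrained system $G_1$, admissibility is read off from the signs of the equilibrium demands in \eqref{eq:regz} (yielding the thresholds $\bar x_{\mathrm{BC}}^{(\pm)}$), the border collisions are located by tracking the monotone branches $\bar x^{(\pm)}(\beta)$ against these thresholds with the case split governed by the sign of $R+v-2$ and by the limiting values as $\beta\to\infty$, and global stability for $1<g<R$ rests on the same region-by-region contraction inequalities ($0<x_t<(g/R)x_{t-1}$ in $Z_1$, and symmetrically in $Z_2$, via the comparison of constrained and unconstrained clearing prices). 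The only divergences are cosmetic — you carry out the algebra for $\beta_{\mathrm{BC}}^{(\pm)}$ and $s_{\mathrm{BC}}^{(-)}$ explicitly where the paper merely states the closed forms, and for $g>R^2$ you observe that the escaping orbit has rising prices and stays in $U$, whereas the paper notes that with $m_t=\pm1$ all three pricing branches coincide — so the argument is correct and matches the paper's.
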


Note that the uptick rule affects the price dynamics also when the supply of outside shares is large. Indeed, independently on $s$, there is always an equilibrium, $(\bar{x}^{(+)},\bar{m})$ or $(\bar{x}^{(-)},\bar{m})$, becoming virtual as $\beta$ increases or decreases.

\subsection{Fundamentalists vs ROC traders}
\label{ssec:roc_a}
Consider models \eqref{ARED_model_no_uptick_2} and \eqref{ARED_model_uptick} with the fundamental predictor \eqref{eq:pr1v} and with the ROC predictor (\ref{eq:roc},b) or (\ref{eq:roc},c,d).

Note that both predictors are such that $f_h(\bar{x}\mathbf{1})/\bar{x}<R$ for any possible equilibrium $(\bar{x},\bar{m})$ with $\bar{x}\neq0$, so by means of Lemma~\ref{lm:neq} the fundamental equilibrium is the only fixed point.
In the simplest case $L=2$, its stability is characterized in the following lemma:


\begin{lemma}
\label{lm:roc}
The following statements hold true for the dynamical systems \eqref{ARED_model_no_uptick_2} and \eqref{ARED_model_uptick} with predictors \eqref{eq:pr1v} and {\rm(}\ref{eq:roc},b{\rm)}, as well as with predictors \eqref{eq:pr1v} and {\rm(}\ref{eq:roc},c,d{\rm)}:
\begin{enumerate}
\item
For $R\ge 2$ the fundamental equilibrium $(0,\bar{m}^{(0)})$ (see Lemma~\ref{lm:fund}) is a stable fixed point for any $\beta>0$.
\item
For $R<2$ the fundamental equilibrium is stable for
\begin{linenomath}
$$
0<\beta<\beta_{\mathrm{NS}}=\Frac{1}{C}\log\hspace{-0.3mm}\left(\Frac{R}{2\hspace{-0.2mm}-\hspace{-0.2mm}R}\right)
$$
\end{linenomath}
and loses stability through a Neimark-Sacker bifurcation at $\beta=\beta_{\mathrm{NS}}$.
\end{enumerate}
\end{lemma}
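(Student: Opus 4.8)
The plan is to reduce both statements to the spectrum of the Jacobian at the fundamental equilibrium. By Lemma~\ref{lm:fund} specialised to $L=2$, that spectrum consists of the two roots of $\lambda^{2}-\gamma_{1}\lambda-\gamma_{2}=0$ together with several eigenvalues equal to $0$ ($H=2$ of them for model~\eqref{ARED_model_no_uptick_2}, $2H-1=3$ for model~\eqref{ARED_model_uptick}), the latter being irrelevant for stability; so it suffices to place the two nonzero eigenvalues with respect to the unit circle. Since the equilibrium demands equal $s>0$, the fundamental equilibrium lies in the interior of region $U\cup Z_{0}$, where model~\eqref{ARED_model_uptick} coincides with model~\eqref{ARED_model_no_uptick_2} up to the dynamically trivial demand coordinates; hence the two models share the same $\gamma_{1},\gamma_{2}$ and the uptick rule does not affect the local stability of the fundamental equilibrium, which is why the lemma treats both models at once. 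Uniqueness of the fixed point is already supplied by Lemma~\ref{lm:neq}, as noted before the statement, since $f_{1}(\bar{x}\mathbf{1})/\bar{x}=v<1<R$ and, equal past deviations giving $\text{ROC}=1$, $f_{2}(\bar{x}\mathbf{1})/\bar{x}=1<R$.

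Next I would compute $\gamma_{1},\gamma_{2}$. For $L=2$ one has $\text{ROC}=(\bar{p}+x_{t-1})/(\bar{p}+x_{t-2})$, so the ROC predictor reads $f_{2}(\mathbf{x}_{t})=(\bar{p}+x_{t-1})^{3}/(\bar{p}+x_{t-2})^{2}-\bar{p}$, whose gradient at $\mathbf{x}_{t}=\mathbf{0}$ is $(3,-2)$. For the S-ROC predictor the crucial observation is that at the fundamental steady state $\text{ROC}=1$, hence the confidence weight $\alpha_{\text{ROC}}$ of~\eqref{eq:roca} equals $1$ and the effective rate $\alpha_{\text{ROC}}\text{ROC}+(1-\alpha_{\text{ROC}})$, viewed as a function of $\text{ROC}$, has both value and first derivative equal to $1$ at $\text{ROC}=1$, i.e. it coincides with the identity to first order; therefore the S-ROC predictor has exactly the same linearisation at the fundamental equilibrium as the plain ROC predictor, which is why the two cases can be handled simultaneously. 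Combining this with $f_{1}(\mathbf{x}_{t})=v\,x_{t-1}$ (gradient $(v,0)$) and the equilibrium fractions $\bar{n}_{1}^{(0)}=(1+\bar{m}^{(0)})/2$, $\bar{n}_{2}^{(0)}=(1-\bar{m}^{(0)})/2$ with $\bar{m}^{(0)}=\tanh(-\beta C/2)$ and $C=C_{1}>0$, one gets $\gamma_{1}=\bigl(v\,\bar{n}_{1}^{(0)}+3\,\bar{n}_{2}^{(0)}\bigr)/R$ and $\gamma_{2}=-2\,\bar{n}_{2}^{(0)}/R$, with $\bar{n}_{2}^{(0)}$ increasing strictly from $1/2$ to $1$ as $\beta$ runs over $(0,\infty)$.

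The third step is the Jury (Schur--Cohn) criterion applied to $p(\lambda)=\lambda^{2}-\gamma_{1}\lambda-\gamma_{2}$: both roots lie in the open unit disk iff $p(1)>0$, $p(-1)>0$ and $|\gamma_{2}|<1$. A one-line check using $0\le v<1$, $R>1$ and $\bar{n}_{1}^{(0)}+\bar{n}_{2}^{(0)}=1$ shows $p(1)=\bigl(R-v\,\bar{n}_{1}^{(0)}-\bar{n}_{2}^{(0)}\bigr)/R>0$ and $p(-1)=1+\bigl(v\,\bar{n}_{1}^{(0)}+5\,\bar{n}_{2}^{(0)}\bigr)/R>0$ for every parameter choice, so stability reduces to the single inequality $|\gamma_{2}|<1$, that is $\bar{n}_{2}^{(0)}<R/2$, that is $\tanh(\beta C/2)<R-1$. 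If $R\ge2$ this holds for all $\beta>0$, which is statement~1. If $R<2$, then $0<R-1<1$ and the inequality is $\beta<\beta_{\mathrm{NS}}$ with $\beta_{\mathrm{NS}}=\tfrac{1}{C}\log\bigl(R/(2-R)\bigr)$ obtained by inverting $\tanh$; for $\beta>\beta_{\mathrm{NS}}$ one has $|\gamma_{2}|>1$, so an eigenvalue lies outside the unit disk and the equilibrium is unstable.

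Finally, at $\beta=\beta_{\mathrm{NS}}$ one has $|\gamma_{2}|=1$ and $p(\pm1)>0$ strictly, so the two eigenvalues form a complex-conjugate pair on the unit circle, bounded away from $\pm1$; since $|\gamma_{2}|=2\bar{n}_{2}^{(0)}/R$ is strictly increasing in $\beta$, the pair crosses the circle transversally, and one checks it also avoids the strong resonances (indeed $0<\gamma_{1}<2$, and $\gamma_{1}=1$ would force $v/R=(v-1)/2<0$), so the fundamental equilibrium undergoes a Neimark--Sacker bifurcation at $\beta_{\mathrm{NS}}$. The steps I expect to require the most care are the first-order equivalence of the S-ROC and ROC linearisations (it rests on $\alpha_{\text{ROC}}=1$ and the first-order identity of the confidence-weighted rate at $\text{ROC}=1$) and, at $\beta=\beta_{\mathrm{NS}}$, confirming that the critical eigenvalue pair is genuinely complex and non-resonant; should the complete normal form of the Neimark--Sacker bifurcation be wanted, the first Lyapunov coefficient has to be computed, which is routine but lengthy.
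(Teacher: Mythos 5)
Your proposal is correct and follows essentially the same route as the paper: both linearize at the fundamental equilibrium via Lemma~\ref{lm:fund}, arrive at the same characteristic quadratic with $\gamma_1=\frac{1}{2R}\left((1+\bar{m}^{(0)})v+3(1-\bar{m}^{(0)})\right)$ and $\gamma_2=-\frac{1}{R}(1-\bar{m}^{(0)})$ (identical for the ROC and S-ROC predictors at first order), and locate the loss of stability at $1-\bar{m}^{(0)}=R$, i.e. $\beta_{\mathrm{NS}}=\frac{1}{C}\log\frac{R}{2-R}$. The differences are only organizational --- you apply the full Jury criterion plus a transversality remark where the paper checks stability at $\beta=0$ and excludes $\lambda=\pm1$ crossings --- and your aside about $\gamma_1=1$ is unnecessary (at criticality $\gamma_1\ge 3/2$) but harmless.
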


Similarly to the cases where chartists
are paired with fundamentalists
(Sect.~\ref{ssec:cha_a}),
the stability of the fundamental equilibrium is guaranteed if the gross return $R$ is sufficiently large.

The stability analysis for $L>2$ is possible, following the lines indicated in \citep{Kuruklis1994}, and the general conclusion is that rates of change calculated on larger windows of past prices stabilize the fundamental equilibrium, up to the point that the fixed point is stable for any value of $\beta$ if $L$ is sufficiently large.

\section{The effect of the uptick rule on share price fluctuations: Numerical simulations.}
\label{sec:num}
In the first two Subsections of this Section we report several numerical analysis of models \eqref{ARED_model_no_uptick_2} and \eqref{ARED_model_uptick} for the two pairs of traders' types considered in Sect. \ref{sec:ana}, with the aim of characterizing the non-stationary (periodic, quasi-periodic, or chaotic) asymptotic regimes. This part contains technical considerations. In the last subsection, we discuss the effects of the uptick rule on the price dynamics.

As done in most of the related works in the literature, we use the traders' adaptability (or intensity of choice) $\beta$ as a bifurcation parameter (two- or higher-dimensional bifurcation analyses are possible, see, e.g., \citep{DercoleCecchetto2010}, but will not be considered here).
For each considered value of $\beta$, the transient dynamics is eliminated by computing the (largest) Lyapunov exponent associated to the orbit\footnote{%
The largest Lyapunov exponent is a measure of the mean divergence of nearby trajectories; it is positive, zero, and negative in chaotic, quasi-periodic, and periodic (or stationary) regimes, respectively \citep{Alligood96}.}, i.e., we delete the number of initial iterations required to compute the largest Lyapunov exponent,
whereas the asymptotic regime is discussed.
To graphically study the bifurcations underwent by the different attractors, we vertically plot the deviations $x_t$ in the attractor at the corresponding value of $\beta$, together with the associated largest Lyapunov exponent L (see, e.g., Figure~(\ref{fig:BH_s})).

In each simulation, we set the initial condition as follows.
The opening price deviation $x_0$ is randomly selected in a small, positive or negative neighborhood of zero to study the stability of the fundamental equilibrium; far from zero to study non-fundamental attractors.
The initial traders' fractions are equally set ($m_1=0$, i.e., $n_{1,1}=n_{2,1}=1/2$).
For model \eqref{ARED_model_uptick}, the initial values assigned to the traders' demands $z_{1,0}$ and $z_{2,0}$ are irrelevant, as the traders' fractions are not updated at $t=1$.

\subsection{Fundamentalists vs chartists}
\label{ssec:cha_n}
We first study the effects of a positive supply of outside shares ($s>0$) on the dynamics of the original model (\ref{ARED_model_no_uptick_2},~\ref{eq:pr1},~\ref{eq:pr1g}) introduced by \citep{BrockHommes1998}, then we study the effects of the uptick rule.
Figure~(\ref{fig:BH_s}) reports the bifurcation diagrams and the corresponding largest Lyapunov exponent obtained for four different values of $s$
(with $s$ in the range of values commonly used in the literature, see \citep{AnufrievTuinstra2009,HommesHuangWang2005}).
The first panel is the case with zero supply of outside shares ($s=0$) and is included for comparison.

\begin{figure}[t]
\centerline{\includegraphics[width=0.7\textwidth]{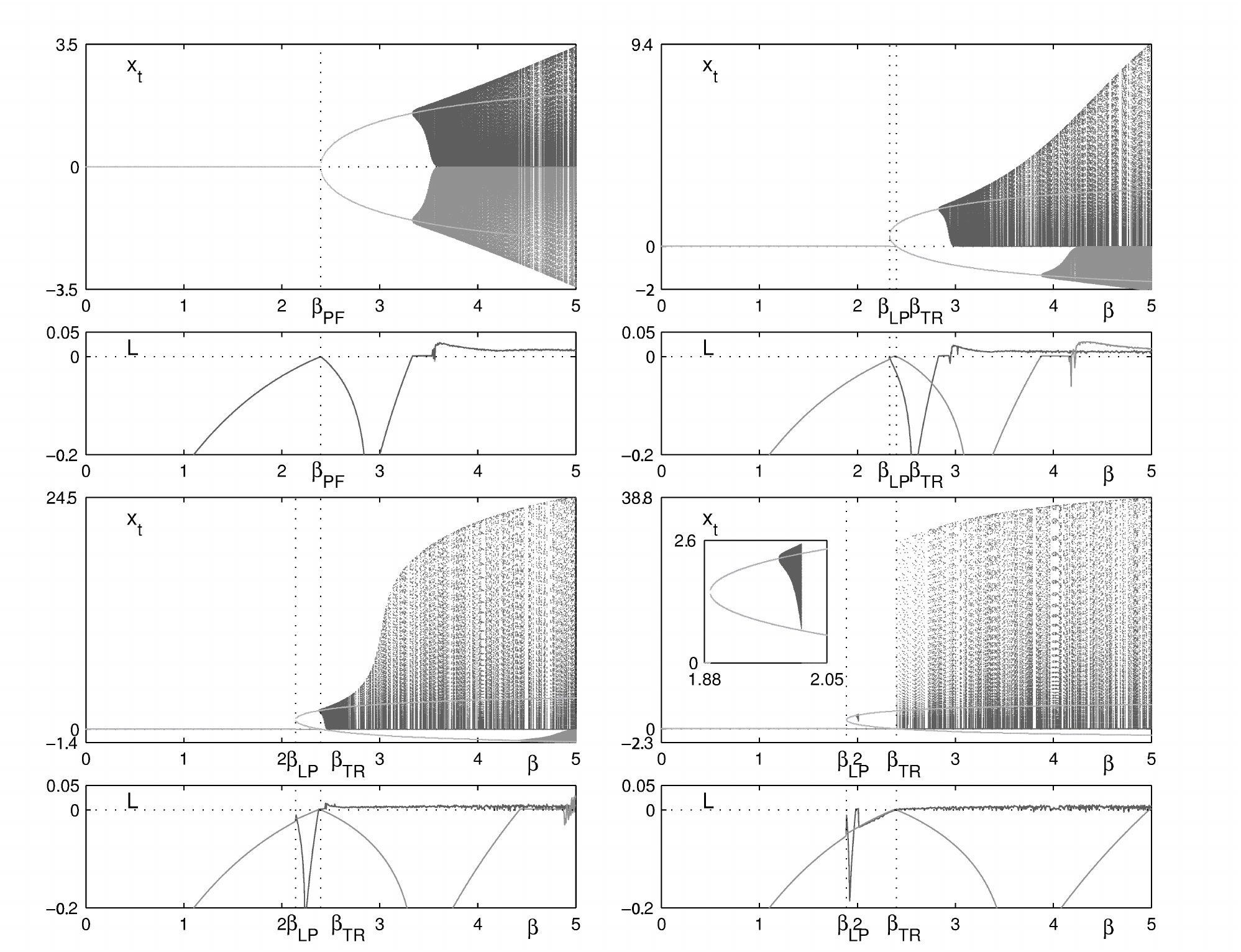}}
\caption{Bifurcation diagrams of model (\ref{ARED_model_no_uptick_2},~\ref{eq:pr1},~\ref{eq:pr1g}) for different values of $s$ (the supply of outside shares of the risky asset): A (first row, left column) $s=0$, in this case $\beta_{\mathrm{TR}}=\beta_{\mathrm{LP}}=\beta_{\mathrm{PF}}$, related (largest) Lyapunov exponents in (second row, left column); B (first row, right column) $s=0.1$, in this case $\beta_{\mathrm{TR}}<\beta_{\mathrm{NS}}^{(+)}$, related (largest) Lyapunov exponents in (second row, right column); C (third row, left column) $s=0.2$, in this case $\beta_{\mathrm{TR}}=\beta_{\mathrm{NS}}^{(+)}$, related (largest) Lyapunov exponents (fourth row, left column); D (third row, right column) $s=0.3$, in this case $\beta_{\mathrm{TR}}>\beta_{\mathrm{NS}}^{(+)}$, related (largest) Lyapunov exponents (fourth row, right column). $\beta_{\mathrm{NS}}^{(+)}$ and $\beta_{\mathrm{NS}}^{(-)}$ are not ticked because there is no analytical expression for them. However, the Neimark-Sacker bifurcation values can be clearly identified in the bifurcation diagrams.
Other parameter values: $R=1.1$, $a=1$, $\sigma=1$, $\bar{y}=1$, $g=1.2$, $C=1$.}
\label{fig:BH_s}
\end{figure}

The bifurcation points $\beta_{\mathrm{LP}}$, $\beta_{\mathrm{TR}}$, and $\beta_{\mathrm{NS}}$ are indicated, together with the non-fundamental equilibria $(\bar{x}^{(\pm)},\bar{m})$ (the gray parabola), and are in agreement with the analytical results of Lemma~\ref{lm:cha}.
In particular, $\beta_{\mathrm{LP}}=\beta_{\mathrm{TR}}=\beta_{\mathrm{PF}}$ indicates a pitchfork bifurcation when $s=0$.

Note that the positive and negative deviation dynamics are separated in model (\ref{ARED_model_no_uptick_2},~\ref{eq:pr1},~\ref{eq:pr1g}).
Indeed, due to the characteristics of the chartist predictor \eqref{eq:pr1g}, if the opening price is above [below] the fundamental value ($x_0>0$ [$x_0<0$]), it will remain so forever (see Eq.~\eqref{ARED_model_no_uptick_2x}).
The positive and negative attractors therefore coexist (with basins of attraction separated by the linear manifold $x_{t-1}=0$ in state space), so that two Lyapunov exponents (one for each of the two attractors) are plotted.

As expected, the fundamental equilibrium is destabilized for sufficiently high traders' adaptability and the amplitude of the price fluctuations increases as $\beta$ increases.
But the amplitude of fluctuations also increases with the supply of outside shares of the risky asset $s$ (note the different vertical scales in Figure~(\ref{fig:BH_s})).
The latter effect is partially due to the risk premium, i.e. $\left(a\sigma^{2}s\right)$, required by traders for holding extra shares that modifies the performance measures of the trading strategies.

Figure~(\ref{fig:BH_s}) shows another interesting dynamical phenomenon. For sufficiently high supply of outside shares of the risky asset ($s=0.3$ in case D), the positive attractor, that appears at the Neimark-Sacker (NS) bifurcation ($\beta_{\mathrm{NS}}^{(+)}$) of equilibrium $(\bar{x}^{(+)},\bar{m})$, collapses through a "homoclinic'' contact with the saddle equilibrium $(\bar{x}^{(-)},\bar{m})$.
The chaotic behavior is then reestablished when the fundamental equilibrium looses stability through the transcritical bifurcation ($\beta_{\mathrm{TR}}$).

The analysis of the dynamics conducted up to now reveals that there are not substantial differences in the price dynamics for different values ​​of $s$, except the amplitude of price fluctuations and some peculiar phenomena, such as the homoclinic contact just discussed. For this reason and to make the discussion more clear and concise, in the following we investigate the effects of the uptick rule on the price dynamics only for $s = 0.1$.

Figure~(\ref{fig:BHbif}) compares models \eqref{ARED_model_no_uptick_2} (left) and \eqref{ARED_model_uptick} (right) with predictors \eqref{eq:pr1} and \eqref{eq:pr1g}.
The bifurcation diagram and the largest Lyapunov exponents associated with the different attractors are reported, together with two examples of state portraits (projections in the plane $(x_{t},x_{t-1})$, see bottom panels).
Blue and red dots identify the points in the attractor in which respectively fundamentalists and chartists have been prohibited from going short.

\begin{figure}[t]
\centerline{\includegraphics[width=0.7\textwidth]{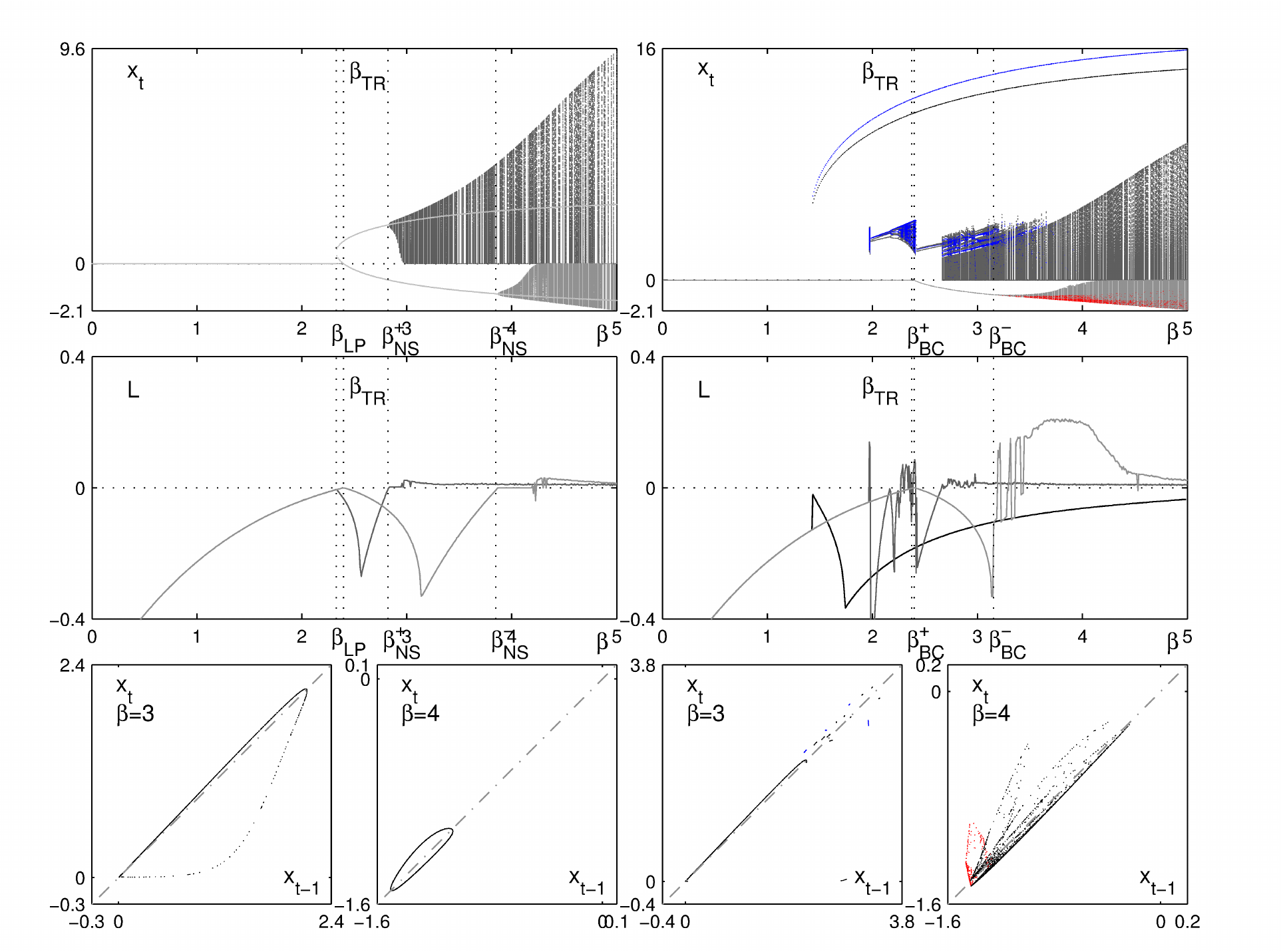}}
\caption{Bifurcation diagrams of models (\ref{ARED_model_no_uptick_2},~\ref{eq:pr1},~\ref{eq:pr1g}) (left) and (\ref{ARED_model_uptick},~\ref{eq:pr1},~\ref{eq:pr1g}) (right)
and exemplary state portraits projections in the plane $(x_{t},x_{t-1})$.
Parameter values as in Figure~(\ref{fig:BH_s})-case(B). The light gray represents the attractor of negative deviations and the corresponding largest Lyapunov exponent. The dark gray represents the attractor of positive deviations and the corresponding largest Lyapunov exponent. The black represents the attractor born through a nonsmooth saddle-node bifurcation and the corresponding largest Lyapunov exponent. The blue dots indicate that fundamentalists cannot have negative positions because of the uptick rule. The red dots indicate that chartists cannot have negative positions because of the uptick rule.}
\label{fig:BHbif}
\end{figure}


The first thing to remark is that multiple positive attractors are present in the constrained dynamics (right column in Figure~(\ref{fig:BHbif})), i.e. when the uptick rule is in place.
In particular a period-two cycle alternating unrestricted trading with restricted trading in which fundamentalists are forced by the uptick rule to take nonnegative positions (black and blue dots) is present for sufficiently large $\beta$. It appears through a nonsmooth saddle-node bifurcation and coexists with the chaotic attractor generated through the loss of fundamental stability.
Before the bifurcation the fundamental equilibrium is globally stable (while it is globally stable in the unconstrained dynamics up to the saddle-node at $\beta=\beta_{\mathrm{LP}}$).

Also the bifurcation structure leading to the chaotic attractor is more involved. The first branch of attractors suddenly appears as $\beta$ increases.
This is probably due to a homoclinic contact with the period-two saddle cycle, but this conjecture has not been verified.
This first branch seems to collapse through a collision with the border $\partial Z_1$, in connection with the border collision of the non-fundamental equilibrium $(\bar{x}^{(+)},\bar{m})$ at $\beta=\beta_{\mathrm{BC}}^{(+)}$.
The remaining thinner attractor later explodes into a larger one, again due to a border collision with $\partial Z_1$.
However, a deeper mathematical investigation would be required to confirm the above conjectures.

As for the negative equilibrium $(\bar{x}^{(-)},\bar{m})$ (for $\beta>\beta_{\mathrm{TR}}$), it loses stability at $\beta=\beta_{\mathrm{BC}}^{(-)}$ through a border collision with $\partial Z_2$, giving way to a chaotic attractor characterized by restrictions on chartists.

Figure~(\ref{fig:BHtsp}) shows an example of time series on the positive chaotic attractor obtained for $\beta=3$
(left column: unconstrained dynamics; right column: dynamics constrained by the uptick rule).
\begin{figure}[t]
\centerline{\includegraphics[width=0.7\textwidth]{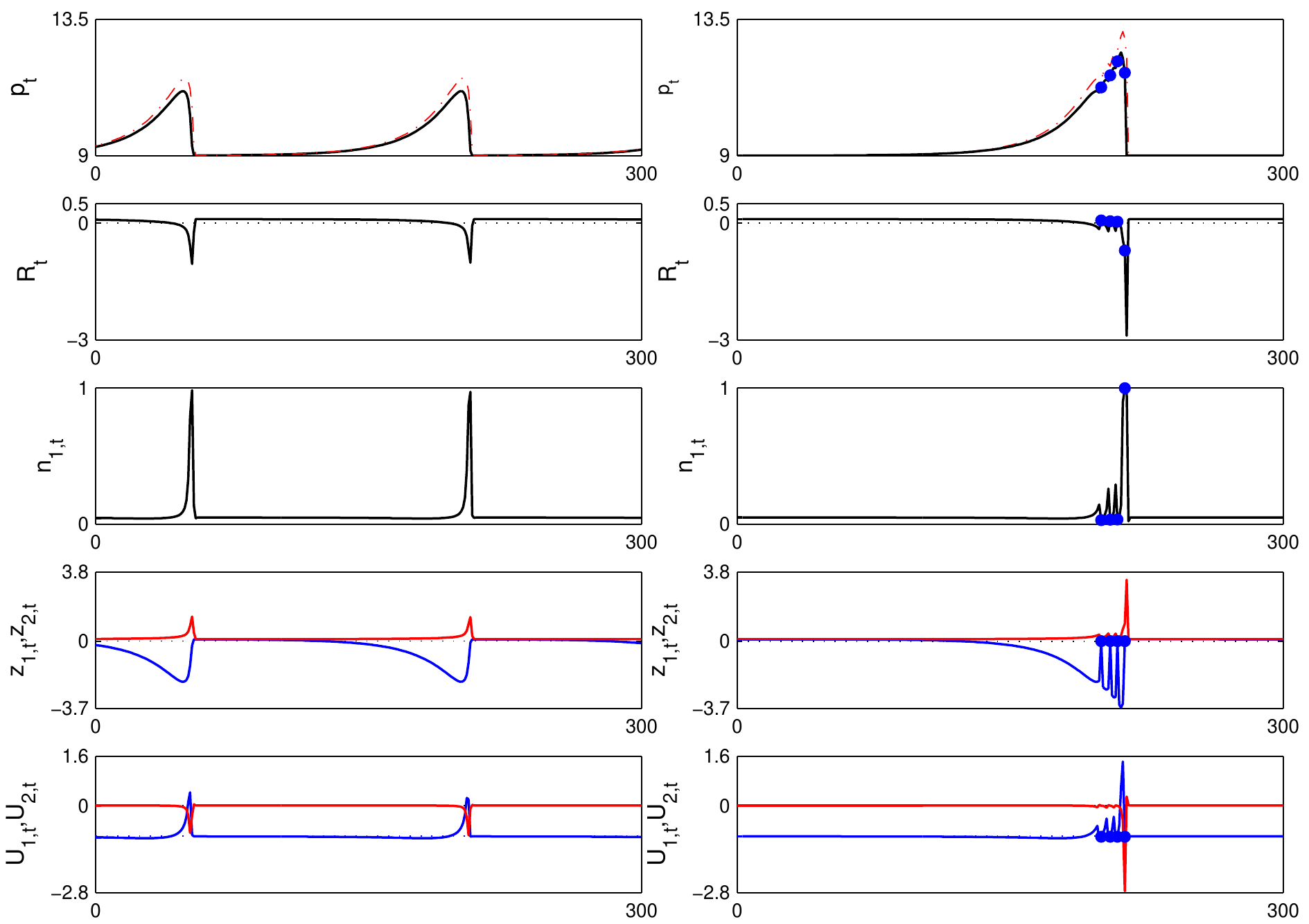}}
\caption{Positive (i.e. above fundamental) time series on the asymptotic regime of models (\ref{ARED_model_no_uptick_2},~\ref{eq:pr1},~\ref{eq:pr1g}) (left) and (\ref{ARED_model_uptick},~\ref{eq:pr1},~\ref{eq:pr1g}) (right) for $\beta=3$. Other parameter values as in Figure~(\ref{fig:BH_s})-case(B). The blue dots indicate that fundamentalists cannot have negative positions because of the uptick rule.}
\label{fig:BHtsp}
\end{figure}
The top panels report the price dynamics (black) and the chartist prediction (red, dashed), with blue and red dots marking the periods in which fundamentalists and chartists are respectively prevented to go short by the uptick rule (right column).
The remaining panels report, from top to bottom, the returns ($R_t=x_t-R\hspace{0.2mm}x_{t-1}+a\hspace{0.2mm}\sigma^2s$) and the traders' fractions ($n_{h,t}$), demands ($z_{h,t}$), and net profits ($U_{h,t}=R_t z_{h,t-1}-C_h$, $h=1,2$, blue for fundamentalists and red for chartists).

The price dynamics is characterized by recurrent peaks (financial bubbles), driven by chartists that expect a price rise and hold the shares of the risky asset and, at the same time, attenuated by fundamentalists which expect a devaluation and sell short (see the negative positions of fundamentalists in the unconstrained dynamics, left). In particular, when the price is closed to its fundamental value, the chartists' trading strategy is more profitable because their expectations are confirmed. Chartists dominate the market and the price is growing until the capital gain cannot compensate for the lower dividend yield. At this point, chartists start to suffer negative returns, while the short positions of the fundamentalists produce profits.
Eventually most of traders adopt the fundamentalist's trading strategy and the price falls down close to the fundamental value.
As soon the price starts to revert to the fundamental value, however, fundamentalists are prohibited from going short in the constrained dynamics because of the uptick rule (right column), and this triggers a further phase of rising prices. This happens several times, with the result of amplifying the price peak. At a certain point, when the price is very far away from its fundamental value, chartists suffer strong losses and the relative fraction of fundamentalists is almost one. The massive presence of fundamentalists pushes the price close to its fundamental value and the uptick rule cannot prevent this from happening. Indeed, if there are almost only fundamentalists, from the pricing equation we have that their demands must be equal to the supply of outside share, i.e. positive. When the price is close to its fundamental price chartists start to have a better performance and the story repeats. Despite this mispricing effect, the frequency of the price peaks is slowed down by the short selling restriction.

Similarly, Figure~(\ref{fig:BHtsn}) shows an example of time series on the negative attractor obtained for $\beta=4$ (here the unconstrained dynamics is quasi-periodic, see Figure~(\ref{fig:BHbif})).
\begin{figure}[t]
\centerline{\includegraphics[width=0.7\textwidth]{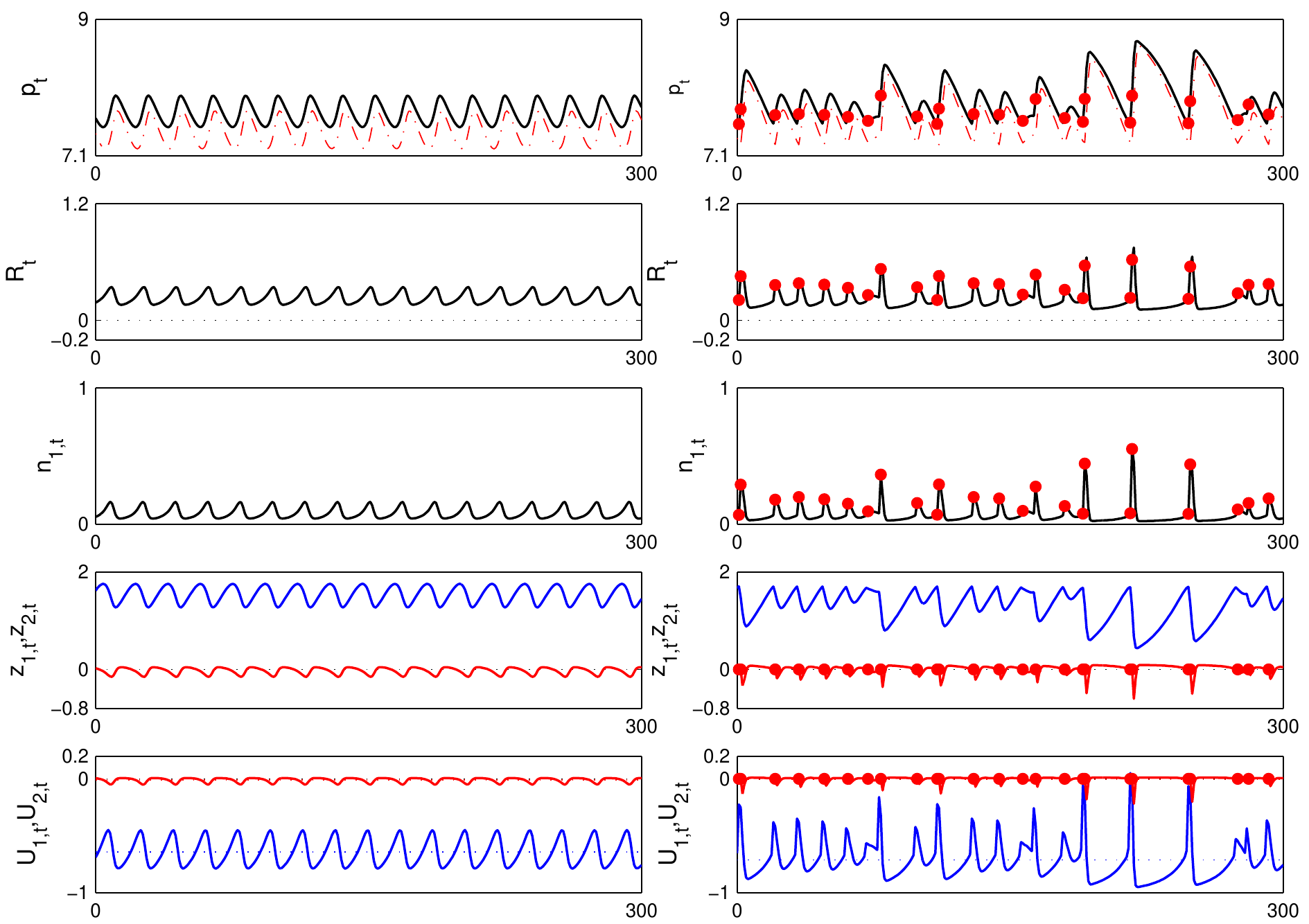}}
\caption{Negative (i.e., below fundamental) time series on the asymptotic regime of models (\ref{ARED_model_no_uptick_2},~\ref{eq:pr1},~\ref{eq:pr1g}) (left) and (\ref{ARED_model_uptick},~\ref{eq:pr1},~\ref{eq:pr1g}) (right) for $\beta=4$. Other parameter values as in Figure~(\ref{fig:BH_s})-case(B). The red dots indicate that chartists cannot have negative positions because of the uptick rule.}
\label{fig:BHtsn}
\end{figure}
In this case of negative price deviations, the chartists go short and have, on average, higher profits than fundamentalists. In the unconstrained dynamics chartists are predominant and drive prices down. This phenomenon is attenuated by the uptick rule, which limit the downward price movements and increases the performance of fundamentalists.
The frequencies of the negative peaks is slightly lowered by the short selling restriction, but they are more irregular due to the presence of chaotic dynamics as indicated by the positivity of the largest Lyapunov exponent, see Figure~(\ref{fig:BHbif}).

The different types of price dynamics, quasi-periodic for the unconstrained model and chaotic for the constrained one, are due to the different types of bifurcations through which the non-fundamental equilibrium $(\bar{x}^{(-)},\bar{m})$ losses stability, a Neimark-Sacker bifurcation in the first case and a border-collision bifurcation in the second. Indeed, as typical in piecewise linear model, at the border-collision bifurcation we have sudden transition from a stable fixed point to a fully developed {\it robust} (i.e., without periodic windows) chaotic attractor, see, e.g. \citep{Bernardo08}.

\subsection{Fundamentalists vs ROC traders}
\label{ssec:roc_n}
Figure~(\ref{fig:ROCbif}) reports the bifurcation analysis of models \eqref{ARED_model_no_uptick_2} (left) and \eqref{ARED_model_uptick} (right) with predictors \eqref{eq:pr1} and (\ref{eq:roc},c,d),
while Figure~(\ref{fig:ROCts}) shows the time series on the chaotic attractor obtained for $\beta=1.4$.
\begin{figure}[t]
\centerline{\includegraphics[width=0.7\textwidth]{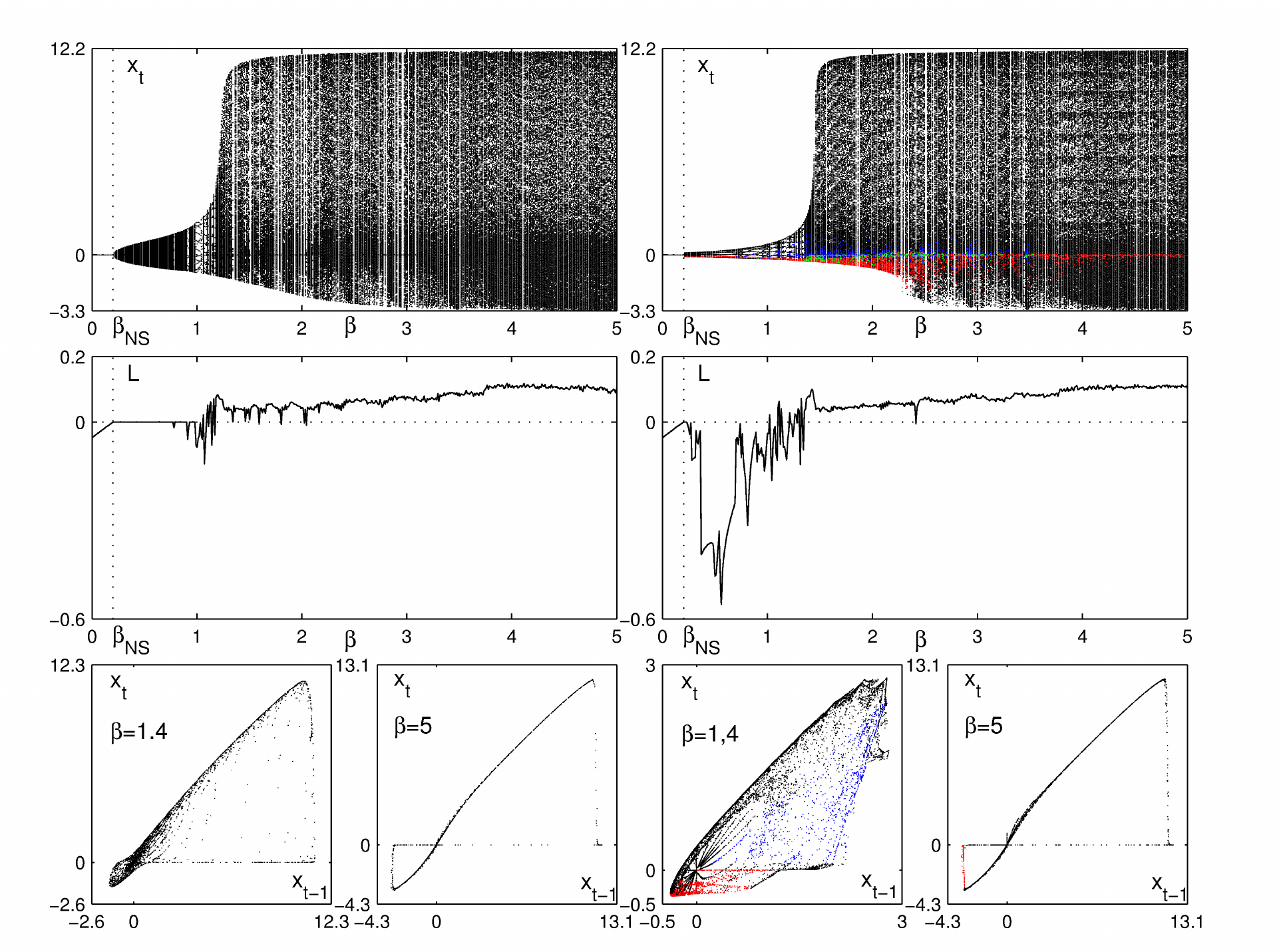}}
\caption{Bifurcation diagrams of models (\ref{ARED_model_no_uptick_2},~\ref{eq:pr1},~\ref{eq:roc},c,d) (left) and (\ref{ARED_model_uptick},~\ref{eq:pr1},~\ref{eq:roc},c,d) (right) and exemplary state portraits projections in the plane $(x_{t},x_{t-1})$.
Parameter values: $\alpha=10$, $L=2$, other values as in Figure~(\ref{fig:BH_s})-case(B). The blue dots indicate that fundamentalists cannot have negative positions because of the uptick rule. The red dots indicate that chartists cannot have negative positions because of the uptick rule.}
\label{fig:ROCbif}
\end{figure}

\begin{figure}[t]
\centerline{\includegraphics[width=0.7\textwidth]{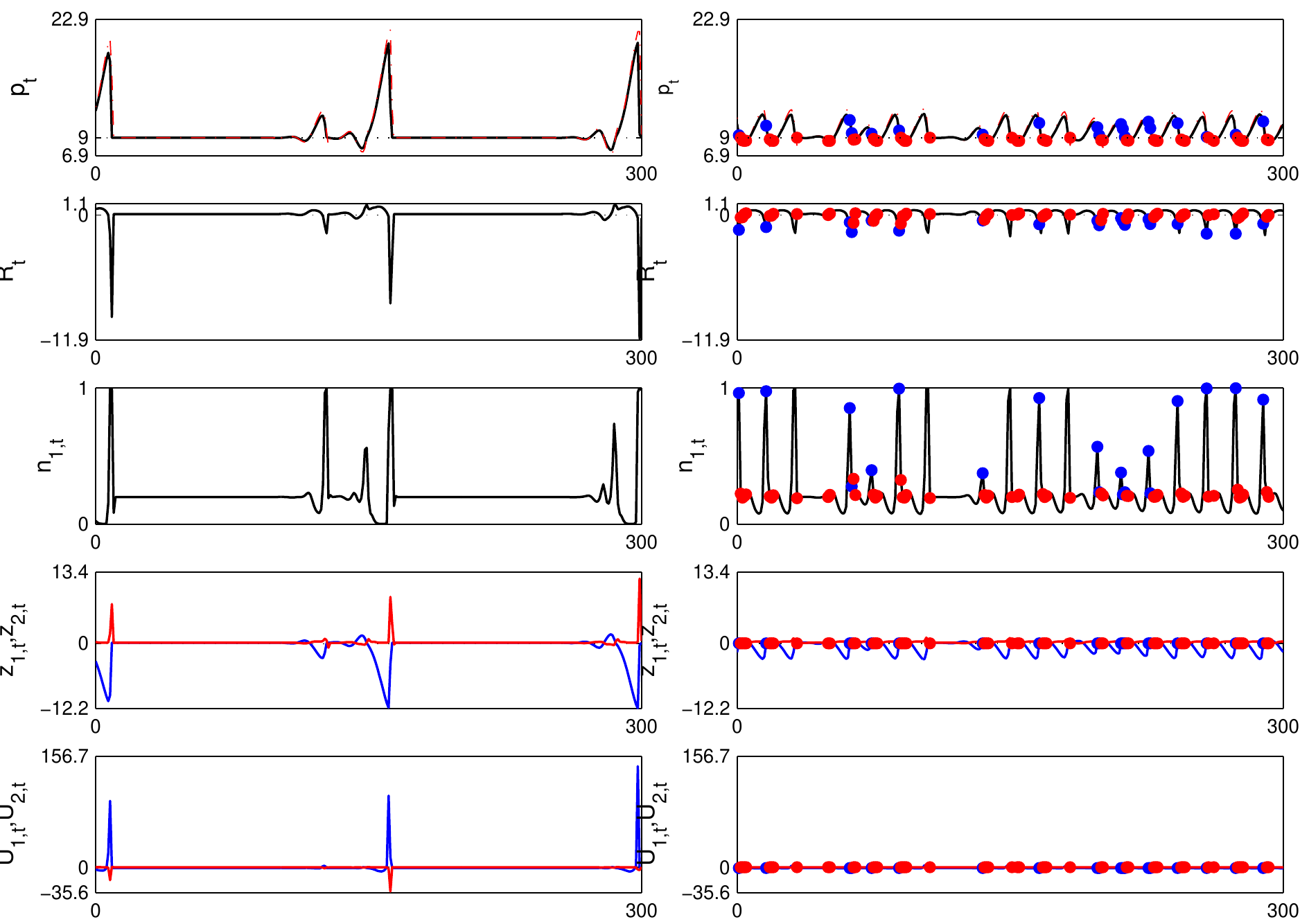}}
\caption{Time series on the asymptotic regime of models (\ref{ARED_model_no_uptick_2},~\ref{eq:pr1},~\ref{eq:roc},c,d) (left) and (\ref{ARED_model_uptick},~\ref{eq:pr1},~\ref{eq:roc},c,d) (right) for $\beta=1.4$. Other parameter values as in Figure~(\ref{fig:ROCbif}). The blue dots indicate that fundamentalists cannot have negative positions because of the uptick rule. The red dots indicate that chartists cannot have negative positions because of the uptick rule.}
\label{fig:ROCts}
\end{figure}

Here the fundamental equilibrium is globally stable up to the Neimark-Sacker bifurcation at $\beta\leq\beta_{\mathrm{NS}}$.
Interestingly, the price fluctuations showed by the non-stationary attractors originated for larger $\beta$ (quasi-periodic with periodic windows and later chaotic) have a remarkably smaller amplitude in the constrained dynamics, than in the unrestricted case.
In this sense, the uptick-rule shows a rather robust stabilizing effect, at least as long $\beta$ is not too large, i.e., traders are not fast enough in changing their beliefs to react to past performances.

This is also evident in the time series of Figure~(\ref{fig:ROCts}), where the short selling restriction also intensifies the frequency of the price peaks.

\begin{figure}[t]
\centerline{\includegraphics[width=0.7\textwidth]{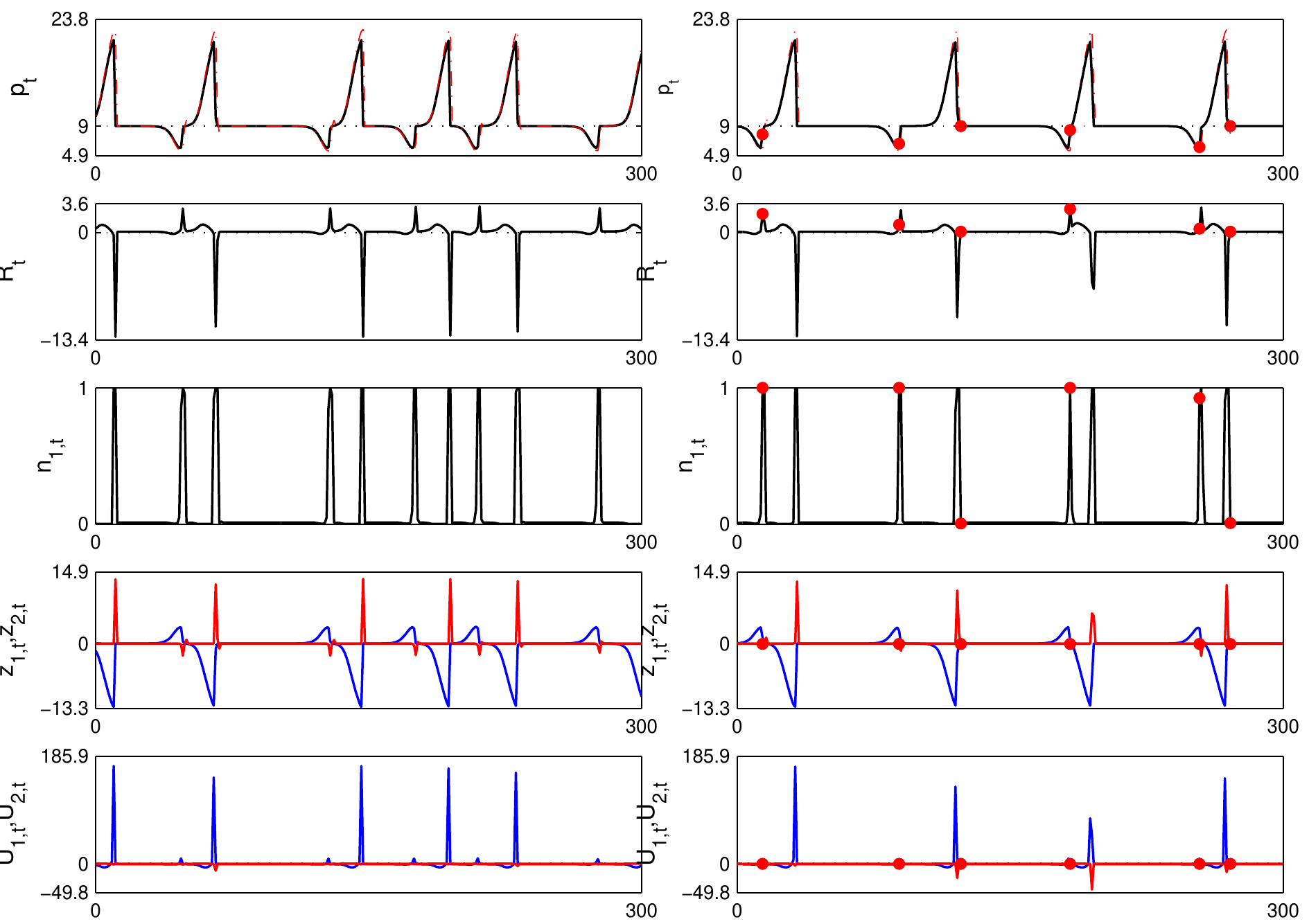}}
\caption{Time series on the asymptotic regime of models (\ref{ARED_model_no_uptick_2},~\ref{eq:pr1},~\ref{eq:roc},c,d) (left) and (\ref{ARED_model_uptick},~\ref{eq:pr1},~\ref{eq:roc},c,d) (right) for $\beta=4.5$. Other parameter values as in Figure~(\ref{fig:ROCbif}). The red dots indicate that chartists cannot have negative positions because of the uptick rule.}
\label{fig:ROCts2}
\end{figure}

\subsection{Economic discussion of the numerical results}
\label{sec:ed}
On the basis of the three declared objectives of the uptick rule, this subsection provides a discussion of the effects of this short selling regulation
in the light of the analytical and by numerical analysis reported in the previous Sections.


Let us start to discuss the scenario described by models \eqref{ARED_model_no_uptick_2} and \eqref{ARED_model_uptick} with predictors \eqref{eq:pr1v} and \eqref{eq:pr1g}
(fundamentalists vs chartists). We first consider the case of negative price deviations (the negative attractor in Figure~(\ref{fig:BHbif}) and the time series in Figure~(\ref{fig:BHtsn})), where we see that chartists take short positions
when prices fall, whereas fundamentalists take long positions
believing that the stock price will rise to reach the fundamental value
(see the unconstrained dynamics in the left column of Figure~(\ref{fig:BHtsn})).
It is important to point out that whenever an asset is undervalued (price below its fundamental value) it should be better to hold it rather than to sell it because the return is always positive: dividend yield outweighs the capital gain effect. Indeed, by going short chartists obtain negative excess returns, see the dynamic of the net profits $U_{2,t}$ in the last row of Figure~(\ref{fig:BHtsn}). However chartists do better than fundamentalists here because the latter are charged a cost, $\left(C\right)$, which is larger than the profit they obtain by holding the assets, compare the net profits of the two trading strategies ($U_{1,t}$ and $U_{2,t}$) again in the last row of Figure~(\ref{fig:BHtsn}). It follows that
chartists are predominant in the market, as indicated by the low value of the fraction $n_1$,
and their trading strategy causes downward movements of the stocks' price.
In the constrained dynamics (right column), the uptick rule limits the possibility to go
short for chartists. This helps to revert the stock prices toward the fundamental value. It follows a better performance for the fundamentalists and their presence in the market increases. As a result, the short selling restriction reduces the negative peaks reached by stock prices.
From this example it is clear the effectiveness of the regulation to meet the last two goals established by the SEC.
Moreover, from the analysis of the bifurcation diagram, it is interesting to note that the negative non-fundamental equilibrium $(\bar{x}^{(-)},\bar{m})$
looses stability in the constrained dynamics at a lower traders' adaptability, i.e., at a lower value of the parameter $\beta$, respect to the unconstrained dynamics.
This is due to the border collision bifurcation at $\beta=\beta_{\mathrm{BC}}^{(-)}$, that occurs before the Neimark-Sacker bifurcation at $\beta_{\mathrm{NS}}^{(-)}$ responsible of the instability in the unconstrained dynamics.
This might be mathematically interpreted as a destabilizing effect of the uptick rule. However, the chaotic fluctuations established after the bifurcation move the prices, on average, closer to the fundamental value with respect to the equilibrium deviation $\bar{x}^{(-)}$.

Considering the same model and predictors, by the time series of Figure~(\ref{fig:BHtsp}) it is
possible to notice that the dynamics of positive price deviations are
characterized (for $\beta>\beta_{\mathrm{NS}}^{(+)}$) by frequently financial bubbles.
These bubbles are enforced by chartists which overvalued the stock prices
in the upward trend and are attenuated by fundamentalists. In fact,
in these phases of rising prices fundamentalists go short driven by the
belief that the price will revert to its fundamental value in the next
period. This increases the supply of shares helping to curb rising prices. At a certain point of the upward trend, the fundamentalists' strategy
take over and prices are driven to the fundamental value of the stock.
During these upward trends of the market the uptick rule prevents the fundamentalists to take short positions, see blue dots in the fourth row,
right column of Figure~(\ref{fig:BHtsp}).
This phenomenon is also emphasized by empirical
analysis (see, e.g., \citep{AlexanderPeterson1999} and  \citep{BoehmerJonesZhang2008}), and represents a flaw of
the regulation.
The result is
an increase of the amplitude of the financial bubbles. As a positive effect
of the regulation, it is possible to notice a reduction in the frequency of
occurrences of these bullish divergences.

The main point is that every time fundamentalists go short they force prices
to converge to the fundamental value. If it were possible to
discriminate between the beliefs of the traders, fundamentalists should not be
forbidden to go short in this situation. However, this is a very difficult task and it is not even so easy to correctly determine the
fundamental values of a risky asset in the real market. A possible solution could be
to allow short sales after a long period of rising prices.
This should avoid pushing up prices because fundamentalists (and
contrarians, but these traders are not taken into consideration in this
work) are driven out
of the market. Another possible solution could be to restrict short selling
only in cases of sharp and sudden falls in stock prices. This should produce
two effects, to let agents believing in the fundamental price go short when the
price increases, reducing positive oscillations of price deviations from
the fundamental value, and to reduce sharp drops in prices observed
when the stock market bubble breaks. An important step in this direction has
already been done, the SEC adopted Rule 201 which was implemented on
February 28, 2011%
\footnote{%
See, Securities Exchange Act Release No. 61595 (Feb. 26, 2010).}.
This new short selling regulation prohibits short selling operations if the value of
the stock decreases by more than 10\% in two consecutive trading sections.

Summarizing the analysis of the positive and negative price deviations from
the fundamental value for the model with predictors \eqref{eq:pr1} and \eqref{eq:pr1g} for the cases
of unrestricted and restricted short sales, we can conclude that the second goal 
of the regulation is ensured, but some distorting effects produced by this rule are
observed, such as overvaluation of shares. Moreover, the short selling restriction can trigger distorting
mechanisms that support dynamics of overpricing, otherwise not feasible in
the long run. This is indicted by the presence of multiple attractors in the region of positive price deviations
in the bifurcation diagram in the right column of Figure~(\ref{fig:BHbif}).

With the intent to provide a more detailed and comprehensive description of the effects of the
uptick rule, the same asset pricing model has been analyzed with a
different non-fundamental predictor (the Smoothed-ROC predictor, see Subsection~\ref{ssec:pre}).
Differently from other trend-following indicators, the Smoothed-ROC
predictor is particularly useful to detect fast and short-term upward and
downward movements of the stock price and gives different trading signals
to agents than predictor \eqref{eq:pr1g}, such as gain and lose of speed in the trend
(see \citep{Elder1993}). In this case, we obtain an interesting and surprising
result, i.e., the uptick rule
helps to reduce the amplitude of price fluctuations, and causes an increase
of the frequency of oscillations
above and below the fundamental value, this is made clear by comparing the
time series of prices in the first row of Figure~(\ref{fig:ROCts}). The explanation of this lies on the higher degree of rationality (compared to the one assumed for chartists) of the non-fundamentalist agents that the use of the non-linear predictor (S-ROC predictor) implies. These agents are uncomfortable with extreme assessments of the value of the shares and when the shares are forced to be overvalued due to the short selling constraints, non-fundamental agents react and become more confidential in the fundamental price. This changes their trading strategies and, as a results, the amplitude of the price fluctuations decreases instead of increasing as might be expected. The choice of the value of $\alpha$ plays an important rule in this.
We can conclude that by using this
couple of predictors it is possible to observe, at least when evolutionary pressure, $\beta$, is not excessively large,
all the main goals of the uptick rule:
short selling restriction does not produce mispricing, prevents chartists from going short during downward price movements
to avoid reaching negative price variation peaks and prevents fundamentalists from 
going short only in a downward price movement for positive price
deviations, reducing the speed of convergence to the fundamental value and
preventing sharps drops in prices. As a further observation, it is worth noticing that the uptick rule prevents extremely negative excess returns and at the same time makes the strategy of fundamentalists on the average more profitable, changing the beliefs of traders, compare $n_{1,t}$ and $n_{2,t}$ in Figure~(\ref{fig:ROCts}). Moreover, looking at the bifurcation diagram of Figure~(\ref{fig:ROCbif}) right-column, it is clear that for relatively high values of the intensity of choice, $\beta$, the uptick rule does not have any effect on the amplitude of price fluctuations. As revealed by the analysis of the time series in Figure~(\ref{fig:ROCts2}), this is due to the fact that for high levels of $\beta$ the market is always dominated by one trading strategy. It follows that the short selling constraint can apply either to a trading strategy adopted almost by any trader or to a trading strategy adopted by almost all the traders. In the first case the regulation does not have any effect on the price, in the second case it reverts the price toward the fundamental value with the results of reducing the frequency of market-bubbles but without reducing the amplitude of them. Compare the two panels in the first row of Figure~(\ref{fig:ROCts2}). For $\beta$ large enough, at each trading section there is only one type of trader that dominates the market and its demand of shares, being equal to the supply of outside share per trader, must be positive, then the uptick rule does not affect the dynamics of prices. 


The conducted analysis points out that the uptick rule meets part of its
goals, but which of them often depends on the condition of the markets.
Moreover, the regulation can produce several side-effects which may be
different according once again to the market's conditions and investor's beliefs which may
strongly influence the effectiveness of the regulation itself. Due to these
findings, studying the impact of the uptick rule on financial markets does not seem to be
an easy
task. Nonetheless, it is possible to isolate some remarkable effects
regardless of market conditions and traders' beliefs. First, the uptick rule
ensures a
reduction of the downward market movements when the shares are undervalued, i.e. when the shares are priced below their fundamental value.
Second, the intensity of choice $\beta$ to switch predictors affects the
effectiveness of the short selling regulation. By using the bifurcation analysis, it
possible to observe that the effectiveness of the regulation fades away
increasing the value of $\beta$
(increasing $\beta$, agents tend to overreact to the market's information,
changing their beliefs quickly to react to past performances).
In other words, the switching destabilizing effect prevails over the regulation's
effects, i.e. when agents overreact to the differences in performance related to different beliefs, the regulation
does not affect the dynamics of stock prices. This is consistent with
many interesting empirical results testifying that there is no statistical
effect of the uptick rule on price fluctuations in
turbulent financial markets (see, e.g., \citep{DietherLeeWerner2009}).

As a final remark, it is important to clarify that the simple asset pricing model
used in this paper can reproduce only some possible ``stylized effects'' which are a direct
consequence of the regulation. However, in the real financial markets many more
different trading strategies and emotional actions are present, which can modify the effectiveness of the uptick rule.


\section{Conclusions and future directions}
\label{sec:cd}
In this paper, we have investigated the effect of the "uptick rule" on an asset price dynamics by means of an asset pricing model with heterogeneous, adaptive beliefs. The analysis has suggested the effectiveness of the regulation in reducing the downward price movements of undervalued shares avoiding speculative behavior whenever the market is characterized by not too many aggressive traders, i.e. when the agents' propensity of changing trading strategy is relatively low. On the contrary, when the agents have a high propensity of changing trading strategy according to the past trading performances, which is a sign of turbulent markets according to the model, the effects of the regulation tend to fade. As a side effect of the regulation, an amplification of the market bubbles in the case of overvalued shares is possible.

This work represents only a starting point. There are still several aspects
that can and deserve to be analyzed. First of all, it is interesting to
analyze the effect of the uptick regulation using the same asset pricing
model with a greater number of investor types, for example contrarians,
chartists and fundamentalists. In fact, there is empirical evidence
in the literature about the switch in trading style by short-sellers.
\citep{DietherLeeWerner2009} found that under the uptick rule most of the
short positions are opened by contrarians, on the contrary, when the uptick
rule is not imposed are chartists, the ones who prefer to go short
(see, e.g., \citep{BoehmerJonesZhang2008}). There is a hypothetical
explanation for this. Chartists take short positions in declining
price trends and the uptick rule makes this operation more difficult, on the
contrary the restriction does not effect the contrarians' short strategy.
They usually go short in upward price trend betting on a change in price
movement with the effect of stabilizing the market. Investigating the validity
of this hypothesis provides a better understanding of the issue.

The regulation should also be evaluated in the contest of the multi-assets
market to discover how the short selling restriction for one stock influences
the price of the others. It is reasonable to expect that traders will switch
to trade stocks that are not effected by the restriction in that specific
moment and this can produce effects on prices which are not easy to predict
without a deep analysis.

Another aspect that deserves to be investigated is the effect of the
regulation when there is a fraction of investors which does not change the
trading strategy (or belief) as in \citep{DieciForoniGardiniHe2006}. As highlighted in this
paper, the uptick rule loses its effectiveness due to a high propensity to switch trading strategy by agents. It follows that, if there are
constraints on the possibility to change trading strategy, we expect an increase of the effectiveness of the regulation and a reduction of unwelcome effects on price dynamics.  Last but not least, the piecewise continuous model here
proposed can be easily adapted according to the new short selling regulation
imposed by the SEC, i.e. Rule 201. Comparing the two cases can help
to understand the pros and cons of the new regulation.

\renewcommand{\theequation}{A\arabic{equation}} 
\setcounter{equation}{0}  
\section{Appendix}\label{Ch4Appendix}

\begin{proof}[Proof of Lemma \ref{lm:fund}]
The existence of the fundamental equilibrium immediately follows by substituting $f_h(\mathbf{0})=0$ into Eqs.~\eqref{ARED_model_no_uptick} and~\eqref{ARED_model_no_uptick_2}.
For the definition of the associated eigenvalues, let us substitute Eq.~\eqref{ARED_model_no_uptick_n} written for $n_{h,t}$ into Eq.~\eqref{ARED_model_no_uptick_x}
(and Eq.~\eqref{ARED_model_no_uptick_2m} written for $m_{t}$ into Eq.~\eqref{ARED_model_no_uptick_2x}).
Then, we can consider $(x_{t-1},\ldots,x_{t-(L+2)})$ as the state variables for both models, and the Jacobian of the system at the fundamental equilibrium is equal to
\begin{linenomath}
$$
\left[
\begin{array}{c|c}
\begin{array}{ccccc}
\gamma_1 & \hspace{4mm} \gamma_2 \hspace{4mm} & \cdots & & \gamma_L\\
1 & 0 & \cdots & & 0\\
0 & 1 & \cdots & & 0\\
\vdots & & \hspace{-1mm}\ddots & & 0\\
0 & 0 & \cdots & \hspace{5mm} 1 \hspace{3mm} & 0
\end{array} &
\begin{array}{ccc}
\hspace{1mm} 0 \hspace{1mm} & & \hspace{1mm} 0 \hspace{1mm}\\
0 & & 0\\
0 & & 0\\
\vdots & & \vdots\\
0 & & 0
\end{array}\\
\hline
\begin{array}{ccccc}
\hspace{-0.2mm} 0 \hspace{1.9mm} & \hspace{4mm} 0 \hspace{5mm} & \cdots & \hspace{5mm} 0 \hspace{2mm} & \hspace{2mm} 1\\
\hspace{-0.2mm} 0 \hspace{1.9mm} & \hspace{4mm} 0 \hspace{5mm} & \cdots & \hspace{5mm} 0 \hspace{2mm} & \hspace{2mm} 0
\end{array} &
\begin{array}{ccc}
\hspace{1mm} 0 \hspace{1mm} & & \hspace{1mm} 0 \hspace{1mm}\\
1 & & 0
\end{array}
\end{array}\right],
$$
\end{linenomath}
which proves the result.
\end{proof}\\

\begin{proof}[Proof of Lemma \ref{lm:neq}]
At the equilibrium price deviation $\bar{x}$, we obviously have $\mathbf{x}_{t}=\bar{x}\mathbf{1}$ and, from Eq.~\eqref{ARED_model_no_uptick_x}, $\bar{x}$ must satisfy
\begin{linenomath}
$$
R = \sum_{h=1}^{H}n_{h,t}\hspace{0.2mm}f_h(\bar{x}\mathbf{1})/\bar{x}.
$$
\end{linenomath}
Being $n_{h,t}\in(0,1)$ for all $h=1,\ldots,H$, this is possible only if $f_h(\bar{x}\mathbf{1})/\bar{x}<R$ for some $h$ and $f_k(\bar{x}\mathbf{1})/\bar{x}>R$ for some $k\neq h$.
\end{proof}\\


\begin{proof}[Proof of Lemma \ref{lm:cha}]
\begin{enumerate}
\item
The uniqueness follows from Lemma \ref{lm:neq}, while the global stability from \eqref{ARED_model_no_uptick_2x}, which can be rewritten as
\begin{linenomath}
$$
x_{t} = \Frac{1}{R}\left(n_{1,t}\hspace{0.2mm}v + n_{2,t}\hspace{0.2mm}g\right)x_{t-1},
$$
\end{linenomath}
and contracts the deviation $x_{t}$ as $t$ goes to infinity.
\item
From Lemma \ref{lm:fund}, the non-zero eigenvalue associated to the fundamental equilibrium is
\begin{linenomath}
$$
\lambda^{(0)}=\gamma_1=\Frac{1}{2R}\left((1+\bar{m}^{(0)})\hspace{0.2mm}v + (1-\bar{m}^{(0)})\hspace{0.2mm}g\right)>0,\quad
\bar{m}^{(0)}=\tanh\left(-\beta\hspace{0.2mm}C/2\right).
$$
\end{linenomath}
Thus the fundamental equilibrium can loose stability only when $\lambda^{(0)}=1$ at a transcritical (or pitchfork) bifurcation
(being fixed point of model (\ref{ARED_model_no_uptick_2},~\ref{eq:pr1v},~\ref{eq:pr1g}) for any admissible parameter setting, it cannot disappear through a saddle-node bifurcation).
Solving $\lambda^{(0)}=1$ for $\beta$ gives $\beta_{\mathrm{TR}}$.

Evaluating Eqs.~\eqref{ARED_model_no_uptick} at the generic equilibrium $(\bar{x},\bar{m})$, solving Eq.~\eqref{ARED_model_no_uptick_2x} for $\bar{m}$, and equating the result to Eq.~\eqref{ARED_model_no_uptick_2m}, we get
\begin{linenomath}
$$
\bar{m}= 1 - 2\hspace{0.2mm}\Frac{R-v}{g-v} =
\tanh\left(-\Frac{\beta}{2}\left((-(R-1)\hspace{0.2mm}\bar{x}+a\hspace{0.2mm}\sigma^{2}s)\hspace{0.2mm}
\Frac{g-v}{a\hspace{0.2mm}\sigma^{2}}\hspace{0.2mm}\bar{x}+C\right)\right),
$$
\end{linenomath}
which solved for $\bar{x}$ gives $\bar{x}^{(\pm)}$.

Solving $\bar{x}^{(+)}=\bar{x}^{(-)}$ for $\beta$ gives $\beta_{\mathrm{LP}}$ and the equilibrium deviations $\bar{x}^{(\pm)}$ are defined only for $\beta\hspace{-0.2mm}>\hspace{-0.2mm}\beta_{\mathrm{LP}}$.
There are therefore no other equilibria and this concludes the proof of points (a), (LP), and (TR).
Note that $\beta_{\mathrm{LP}}=\beta_{\mathrm{TR}}$ when $s=0$ (the transcritical and saddle-node bifurcations coincide at a pitchfork bifurcation).

Substituting Eq.~\eqref{ARED_model_no_uptick_2m} written for $m_{t}$ into Eq.~\eqref{ARED_model_no_uptick_2x} and using $(x_{t-1},x_{t-2},x_{t-3})$ as state variables, the Jacobian of the systems  at equilibria $(\bar{x}^{(\pm)},\bar{m})$ is given by
\begin{linenomath}
$$
\left[\begin{array}{ccc}
\gamma_1^{(\pm)} & \gamma_2^{(\pm)} & \gamma_3^{(\pm)}\\
1 & 0 & 0\\
0 & 1 & 0
\end{array}\right],
$$
\end{linenomath}
with
\begin{linenomath}
$$
\gamma_1^{(\pm)} = 1 + \Frac{\gamma}{R}\hspace{0.2mm}\bar{x}^{(\pm)},\quad
\gamma_2^{(\pm)} = -\gamma\hspace{0.2mm}\bar{x}^{(\pm)},\quad
\gamma_3^{(\pm)} = \Frac{\gamma}{R}\left(-(R-1)\hspace{0.2mm}\bar{x}^{(\pm)}+a\hspace{0.2mm}\sigma^{2}s\right),
$$
\end{linenomath}
and
\begin{linenomath}
$$
\gamma = \beta\hspace{0.2mm}\bar{x}^{(\pm)}
\Frac{(g-v)^2}{4\hspace{0.2mm}a\hspace{0.2mm}\sigma^{2}}\hspace{0.5mm}
\mathrm{sech}^{\hspace{-0.1mm}2}\hspace{-0.8mm}\left(\hspace{-0.5mm}-\Frac{\beta}{2}\left((-(R-1)\hspace{0.2mm}\bar{x}^{(\pm)}+a\hspace{0.2mm}\sigma^{2}s)\hspace{0.2mm}
\Frac{g-v}{a\hspace{0.2mm}\sigma^{2}}\hspace{0.2mm}\bar{x}^{(\pm)}+C\right)\hspace{-0.5mm}\right),
$$
\end{linenomath}
so the three associated eigenvalues $\lambda_1$, $\lambda_2$, $\lambda_3$ are the roots of the characteristic equation
\begin{linenomath}
$$
\lambda^3-\gamma_1^{(\pm)}\lambda^2-\gamma_2^{(\pm)}\lambda-\gamma_3^{(\pm)}=0.
$$
\end{linenomath}

In particular, imposing $\lambda=1$ and solving the characteristic equation for $\bar{x}^{(\pm)}$, gives only zero and $x_{\mathrm{LP}}$ as solutions, so no other transcritical, saddle-node (or pitchfork) bifurcation is possible.
In contrast, imposing $\lambda_1=-1$ and taking into account that $\lambda_1+\lambda_2+\lambda_3=\gamma_1$, we get $\lambda_2+\lambda_3=2+\gamma\hspace{0.2mm}\bar{x}^{(\pm)}/R>2$ (note that $\gamma\hspace{0.2mm}\bar{x}^{(+)}>0$ and that $\gamma\hspace{0.2mm}\bar{x}^{(-)}\ge 0$ only vanishes at the transcritical bifurcation), so that equilibrium $(\bar{x}^{(\pm)},\bar{m})$ would be unstable at a period-doubling (flip) bifurcation.
However, both equilibria $(\bar{x}^{(\pm)},\bar{m})$ loose stability by increasing $\beta$, because the coefficient $\gamma_1$ linearly diverges with $\beta$
(the limit as $\beta\to\infty$ of the $\mathrm{sech}$ argument is finite and equal to $-\log((R\hspace{-0.2mm}-\hspace{-0.2mm}v)/(g\hspace{-0.2mm}-\hspace{-0.3mm}R))/2<0$), so the same does the sum of the eigenvalues.
Stability is therefore lost through a Neimark-Sacker bifurcation and this concludes the proof of points (b), (c), and (NS$^{(\pm)}$).
\item
For $g>2R-v$, the non-zero eigenvalue $\lambda^{(0)}$ associated to the fundamental equilibrium is larger than one for any $\beta>0$.
We also have $\beta_{\mathrm{LP}}<0$, $\beta_{\mathrm{TR}}<0$, and equilibria $(\bar{x}^{(\pm)},\bar{m})$ are defined for any $\beta>0$ with $\bar{x}^{(+)}>0$ and $\bar{x}^{(-)}<0$.
Similarly to point 2, they loose stability through a Neimark-Sacker bifurcation.
\item
In the limiting case $\beta\to\infty$, from Eq.~\eqref{ARED_model_no_uptick_2m} we have
\begin{linenomath}
$$
m_{t+1}=\left\{\begin{array}{rl}
1  & \text{if}\quad
(R\hspace{0.2mm}x_{t-1}-x_{t}-a\hspace{0.2mm}\sigma^{2}s)
\Frac{g-v}{a\hspace{0.2mm}\sigma^{2}}x_{t-1}>C,
\\
-1 & \text{if}\quad
(R\hspace{0.2mm}x_{t-1}-x_{t}-a\hspace{0.2mm}\sigma^{2}s)
\Frac{g-v}{a\hspace{0.2mm}\sigma^{2}}x_{t-1}<C.
\end{array}\right.
$$
\end{linenomath}
Starting at $x_0=\pm\eps$ with sufficiently small $\eps>0$ and $m_1=-1$, we therefore have $x_t=\pm\eps(g/R)^t$ (see Eq.~\eqref{ARED_model_no_uptick_2x}) as long as $m_{t+1}$ stays at $-1$.
Thus, $x_t$ diverges if
\begin{linenomath}
$$
\eps^2(g/R)^{2t-1}\left(\Frac{R^2}{g}-1 \mp a\hspace{0.2mm}\sigma^{2}s\hspace{0.2mm}\eps^{-1}(g/R)^{-t}\right)
\Frac{g-v}{a\hspace{0.2mm}\sigma^{2}}>C,
$$
\end{linenomath}
is never satisfied for increasing $t$, i.e., when $g>R^2$.
\end{enumerate}
\end{proof}\\

\begin{proof}[Proof of Lemma \ref{lm:cha_uptick}]
\begin{enumerate}
\item
At the fundamental equilibrium, the traders' demands are positive ($\bar{z}^{(0)}_1=\bar{z}^{(0)}_2=s$), so $(0,\bar{m}^{(0)})$ is an admissible fixed point of the unconstrained dynamics for any admissible parameter setting. Its local stability is therefore ruled by Lemma~\ref{lm:cha}.

The global stability for $1<g<R$ is a consequence of the following arguments.
First, the price deviation is contracted from period $(t-1)$ to period $t$ (i.e., $|x_t|<|x_{t-1}|$) whenever the unconstrained dynamics is applied (see Lemma~\ref{lm:cha}, point 1).
Second, the unconstrained deviation $x_t^{(0)}$ following $x_{t-1}$ is smaller than any of the constrained deviations $x_t^{(1)}$ and $x_t^{(2)}$ given by Eq.~\eqref{ARED_model_uptick_x} in region $Z_1$ and $Z_2$, respectively.
This is graphically clear from Figure~(\ref{fig:mc}), and is analytically shown by noting that
\begin{linenomath}
$$
x_t^{(h)}\hspace{-0.2mm}-x_t^{(0)} = -\Frac{a\hspace{0.2mm}\sigma^{2}}{R}\hspace{0.2mm}\Frac{n_h}{n_k}\hspace{0.2mm}z_{h,t}^{(0)}>0,\quad h=1,2,\; k\neq h
$$
\end{linenomath}
(see \eqref{eq:regz}, and recall that $z_{h,t}^{(0)}<0$ in region $Z_h$).
Third, from Eq.~\eqref{ARED_model_uptick_x} we get that $0<x_t^{(0)}<x_t^{(h)}<(g/R)x_{t-1}$ when $x_{t-1}>0$ and $x_t^{(0)}<x_t^{(h)}<(v/R)x_{t-1}<0$ when $x_{t-1}<0$.
Thus, the constrained dynamics in regions $Z_1$ and $Z_2$ also contracts the price deviation from period $(t-1)$ to period $t$.

Being $\bar{x}^{(+)}>0$, equilibrium $(\bar{x}^{(+)},\bar{m})$ can only collide with border $\partial Z_1$ (see \eqref{eq:bor}), at which $\bar{x}^{(+)}=\bar{x}_{\mathrm{BC}}^{(+)}$ (the expression for $\bar{x}_{\mathrm{BC}}^{(+)}$ can be easily obtained by solving the first equation in \eqref{eq:regz} with $z_{1,t}^{(0)}=0$ for $x_{t-1}$).
It is obviously admissible iff $\bar{x}^{(+)}\le\bar{x}_{\mathrm{BC}}^{(+)}$.

Depending on the parameter setting, equilibrium $(\bar{x}^{(-)},\bar{m})$ can be either positive or negative, and can therefore collide with both borders $\partial Z_1$ and $\partial Z_2$.
At the border $\partial Z_2$, $\bar{x}^{(-)}=\bar{x}_{\mathrm{BC}}^{(-)}$ (the expression for $\bar{x}_{\mathrm{BC}}^{(-)}$ is obtained by solving the second equation in \eqref{eq:regz} with $z_{2,t}^{(0)}=0$ for $x_{t-1}$), so that $(\bar{x}^{(-)},\bar{m})$ is admissible iff  $\bar{x}_{\mathrm{BC}}^{(-)}\le\bar{x}^{(-)}\le\bar{x}_{\mathrm{BC}}^{(+)}$.
\item
If $R+v>2$ (case (a)), $\bar{x}_{\mathrm{LP}}$ from Lemma~\ref{lm:cha} is smaller than $\bar{x}_{\mathrm{BC}}^{(+)}$, so that equilibria $(\bar{x}^{(\pm)},\bar{m})$ are admissible at the saddle-node bifurcation.
The price deviation $\bar{x}^{(+)}$ increases as $\beta$ increases and reaches $\bar{x}_{\mathrm{BC}}^{(+)}$ at $\beta=\beta_{\mathrm{BC}}^{(+)}$ (collision with border $\partial Z_1$).
If $R+v<2$ (case (b)), equilibria $(\bar{x}^{(\pm)},\bar{m})$ are virtual at the saddle-node bifurcation.
The price deviation $\bar{x}^{(-)}$ decreases as $\beta$ increases and reaches $\bar{x}_{\mathrm{BC}}^{(+)}$ at $\beta=\beta_{\mathrm{BC}}^{(+)}$.
If $R+v=2$ (case (c)), then $\beta_{\mathrm{BC}}^{(+)}=\beta_{\mathrm{LP}}$.

Equilibrium $(\bar{x}^{(-)},\bar{m})$ collides with the border $\partial Z_2$ only if the limit of $\bar{x}^{(-)}$ as $\beta\to\infty$ is below $\bar{x}_{\mathrm{BC}}^{(-)}$.
This yields the condition on $s$ and the border collision at $\beta=\beta_{\mathrm{BC}}^{(-)}$ in point (d).
\item
For $g>2R-v$, $\bar{x}^{(+)}$ increases as $\beta$ increases, whereas $\bar{x}^{(-)}$ decreases, and their limiting value for $\beta\to\infty$ are as in point 2.
Equilibrium $(\bar{x}^{(+)},\bar{m})$ is always virtual, because its limiting value is above $\bar{x}_{\mathrm{BC}}^{(+)}$ for any admissible parameter setting.
Equilibrium $(\bar{x}^{(-)},\bar{m})$ becomes admissible at $\beta=\beta_{\mathrm{BC}}^{(-)}$ only if its limiting value is above $\bar{x}_{\mathrm{BC}}^{(-)}$, which gives the conditions at points (a) and (b).
\item
In the limiting case $\beta\to\infty$, $m_t$ switches between $\pm 1$, so that only one type of trader is present and the three options in Eq.~\eqref{ARED_model_uptick_x} give the same price deviation $x_t$. The result therefore follows from Lemma~\ref{lm:cha}.
\end{enumerate}
\end{proof}\\

\begin{proof}[Proof of Lemma \ref{lm:roc}]
From Lemma~\ref{lm:fund}, the characteristic equation associated with the nontrivial eigenvalues of the fundamental equilibrium is $\lambda^2-\gamma_1\lambda-\gamma_2=0$, with
\begin{linenomath}
$$
\gamma_1 = \Frac{1}{2R}\left((1+\bar{m}^{(0)})\hspace{0.2mm}v + 3(1-\bar{m}^{(0)})\right),\quad
\gamma_2 = -\Frac{1}{R}\hspace{0.2mm}(1-\bar{m}^{(0)}).
$$
\end{linenomath}
Note that the same characteristic equation is obtained for both predictors (\ref{eq:roc},b) and (\ref{eq:roc},c,d), and also for different choices of the confidence function \eqref{eq:roca}, as long as $\mathrm{\partial}\hspace{0.2mm}\alpha_{\text{ROC}}/\mathrm{\partial}\hspace{0.2mm}\text{ROC}|_{\text{ROC}=1}=0$.

The fundamental equilibrium is stable at $\beta=0$ (the Routh-Hurwitz-Jury test for 2nd-order polynomials requires $-\gamma_2|_{\beta=0}=1/R<1$ and $\gamma_1|_{\beta=0}=(v+3)/(2R)<1-\gamma_2|_{\beta=0}=(R+3+R-1)/(2R)$, which are readily satisfied).

As $\beta$ increases, transcritical and saddle-node (or pitchfork) bifurcations are not possible.
In fact, substituting $\lambda_1=1$ into the constraints:
\begin{linenomath}
$$
\lambda_1+\lambda_2=\gamma_1,\quad
\lambda_1\lambda_2=-\gamma_2,
$$	
\end{linenomath}
and eliminating $\lambda_2$, we get the contradiction
\begin{linenomath}
$$
R=\Frac{v}{2}\hspace{0.2mm}(1+\bar{m}^{(0)}) + \Frac{1}{2}\hspace{0.2mm}(1-\bar{m}^{(0)})
$$
\end{linenomath}
(with left-hand side larger than one and right-hand side smaller than 1).

Similarly, we exclude flip bifurcations: imposing $\lambda_1=-1$ in the above constraints and eliminating $\lambda_2$, we get the contradiction
\begin{linenomath}
$$
-R=\Frac{v}{2}\hspace{0.2mm}(1+\bar{m}^{(0)}) + \Frac{5}{2}\hspace{0.2mm}(1-\bar{m}^{(0)}).
$$
\end{linenomath}
(with left-hand side positive and right-hand side negative).

To look for a Neimark-Sacker bifurcation, we impose $\lambda_1\lambda_2=1$ and $|\lambda_1\lambda_2|<2$.
Under the first condition, the second turns into $v\hspace{0.2mm}(1+\bar{m}^{(0)})<R$ that is always satisfied (recall that $\bar{m}^{(0)}<0$, see Lemma~\ref{lm:fund}).
Solving the first condition for $\beta$ gives $\beta_{\mathrm{NS}}$.
\end{proof}

\bibliographystyle{plainnat}
\bibliography{C:/Users/UTENTE/Documents/PhD_Thesis_Davide_Radi/Thesis_ref}

\end{document}